\pgfplotsset{compat=1.16}
\pgfplotsset{
    every axis/.append style={
        table/col sep=comma,
        % Chart area
        tick align=outside,
        tick pos=left,
        label style={font=\small},
        ticklabel style={font=\footnotesize},
        grid style={line width=.1pt, draw=gray!20},
        % X-axis
        tick scale binop=\times,
        xmajorgrids=false,
        xticklabel style={rotate=0, anchor=near xticklabel},
        % Y-axis
        ymajorgrids=false,
        %scaled y ticks=false,
        %yticklabel style={/pgf/number format/fixed},
        %yticklabel style={/pgf/number format/fixed, rotate=90},
        % Legend
        legend style={
            legend cell align=left,
            font=\scriptsize,
        }
    },
    AgentStyle/.style={
        mark options={solid, scale=0.5},
        mark=*,
        nodes near coords,
        point meta=explicit symbolic,
        every node near coord/.append style={font=\scriptsize,anchor=\myanchor},
        txtBlue
    },
    BaselineStyle/.style={
        only marks,
        mark options={solid, scale=0.5},
        mark=*,
        nodes near coords,
        point meta=explicit symbolic,
        every node near coord/.append style={font=\scriptsize,anchor=\myanchor},
        black
    },
    LineStyle/.style={no markers, thin}
}
\DeclareMathOperator*{\EV}{\mathbb{E}}
\DeclareMathOperator*{\EVp}{\mathbb{E}_\pi}
\DeclareMathOperator*{\esssup}{ess\,sup}
\newcommand{\bbE}{\mathbb{E}}
\newcommand{\bbP}{\mathbb{P}}
\newcommand{\bbQ}{\mathbb{Q}}
\def\eps{\varepsilon}
\def\LGD{\ensuremath{\mathsf{LGD}}}
\def\Rec{\ensuremath{\mathsf{Rec}}}
\def\CVA{\ensuremath{\mathsf{CVA}}}
\def\CDS{\mathsf{CDS}}
\def\EUR{\mathsf{EUR}}
\def\USD{\mathsf{USD}}
\newcommand{\Aspace}{\mathcal{A}}
\newcommand{\Sspace}{\mathcal{S}}
\newcommand{\Tspace}{\mathcal{T}}
\theoremstyle{plain}
\newtheorem{theorem}{Theorem}[section]
\newtheorem{lemma}[theorem]{Lemma}
\theoremstyle{definition}
\newtheorem{definition}{Definition}
\theoremstyle{remark}
\newtheorem{remark}{Remark}
\begin{document}

\title{CVA Hedging by Risk-Averse Stochastic-Horizon Reinforcement Learning}

\author{R. DALUISO$^a{\dag}$\thanks{$^a$Corresponding author. Email: roberto.daluiso@intesasanpaolo.com. \orcidlink{0000-0001-5943-5808} \href{https://orcid.org/0000-0001-5943-5808}{https://orcid.org/0000-0001-5943-5808}},
	M. PINCIROLI$^b{\ddag}$\thanks{$^b$\orcidlink{0009-0000-8278-0212} \href{https://orcid.org/0009-0000-8278-0212}{https://orcid.org/0009-0000-8278-0212}},
    M. TRAPLETTI$^c{\ddag}$\thanks{$^c$\orcidlink{0009-0001-1241-4388} \href{https://orcid.org/0009-0001-1241-4388}{https://orcid.org/0009-0001-1241-4388}},
    \and E. VITTORI$^d{\ddag}$\thanks{$^d$\orcidlink{0000-0003-4648-1797} \href{https://orcid.org/0000-0003-4648-1797}{https://orcid.org/0000-0003-4648-1797}}\\
	\affil{$\dag$Interest Rates and Credit Models, IMI Corporate and Investment Banking,\\Intesa Sanpaolo, Largo Mattioli 3, 20121 Milano, Italy\\
	$\ddag$CVA Management and A.I.~Investments, IMI Corporate and Investment Banking,\\Intesa Sanpaolo, Largo Mattioli 3, 20121 Milano, Italy}
	\received{v1.0 released December 2023}
}

\maketitle

\begin{abstract}
This work studies the dynamic risk management of the risk-neutral value of the potential credit losses on a portfolio of derivatives. Sensitivities-based hedging of such liability is sub-optimal because of bid-ask costs, pricing models which cannot be completely realistic, and a discontinuity at default time. We leverage recent advances on risk-averse Reinforcement Learning developed specifically for option hedging with an \textit{ad hoc} practice-aligned objective function aware of pathwise volatility, generalizing them to stochastic horizons. We formalize accurately the evolution of the hedger's portfolio stressing such aspects. We showcase the efficacy of our approach by a numerical study for a portfolio composed of a single FX forward contract.
\end{abstract}

\begin{keywords}
Credit valuation adjustment; Counterparty risk; Foreign exchange; Collateral; Transaction costs; Model misspecification
\end{keywords}

% C6 Mathematical Methods - Programming Models - Mathematical and Simulation Modeling
% -> C61 Optimization Techniques - Programming Models - Dynamic Analysis
% -> C63 Computational Techniques - Simulation Modeling
\begin{classcode}C61; C63\end{classcode}

\section{Introduction}

Each time a financial institution has derivatives in place with some counterparty, it must account for the fair value of the losses it will incur if the counterparty defaults. Such fair value is called Credit Valuation Adjustment (CVA), and its variations impact the Profit and Losses of the financial institution: in this respect, it acts just like a high-dimensional hybrid claim, but one much more complex than the single positions, because it depends jointly on the risk-neutral default probabilities of the counterparty and on the underlying risk factors of all the derivatives.

To offset such variations at least partially, specialized traders can use both instruments sensitive to the underlying risk drivers and Credit Default Swaps on the counterparty. The traded amounts could be defined to cancel the first order sensitivities of CVA to market variables, but this standard practitioner's ``delta hedge'' may be highly suboptimal because textbook hypotheses are particularly far from reality:
\begin{enumerate}
	\item Hedging instruments are often expensive to trade (credit spreads and implied volatilities being always among the risk factors);
	\item The dynamics of CVA cannot be continuous at least at default time where the exposure suddenly jumps to the full value of the loss;
	\item The pricing model for a complex object like CVA is most probably misspecified.
\end{enumerate}

For these reasons, we have proposed in \citet{daluiso2023Acva} to give up closed form solutions in favour of a fully flexible simulated environment where an artificial agent can learn a good hedging strategy by interaction and direct optimization. In the test cases presented in the mentioned conference proceedings, the idea performs very well in that:
\begin{enumerate}
	\item It recovers the analytical benchmark when both costs and jump events are neglected;
	\item When transaction costs are non-negligible, it can produce savings without significant increases in Profit and Loss volatility;
	\item When correlations are also present, it is able to save even more by partly cross-hedging the more costly underlyings with the less costly ones;
	\item When the counterparty is very risky, it finds a policy which is quite different from the ``delta hedging'' benchmark, and significantly outperforms it.
\end{enumerate}

Given the above evidence, the aim of this paper is to provide a more complete treatment of CVA hedging in a risk-averse Reinforcement Learning framework. In particular, on the theoretical side we aim to:
\begin{enumerate}
	\item Describe an environment setup which allows a full decoupling of pricing rules from the dynamics of the underlying risk factors.
	\item Give a complete formalization of the hedging problem inclusive of realistic details such as collateral, transaction costs in multiple currencies, and non-negligible interest rates.
	\item Derive a generalization of the Trust-Region Volatility Optimization algorithm working with a stochastic horizon.
	\item Introduce an importance sampling scheme to have a faster and more stable learning of the effect of the rare default event on the objective function.
\end{enumerate}

On the empirical side, we first of all run variants of the experiments in \citet{daluiso2023Acva} to confirm its findings with the generalized and more realistic model of the present paper. Then, we present further tests focused on the algorithmic innovations about stochastic horizon in risk-averse~RL.

The rest of the paper is organized as follows. Section~\ref{sec:literature} surveys relevant previous work. Section~\ref{sec:environment} describes the financial problem in detail while keeping the specific choice of models still open. Section~\ref{sec:RL} contains a brief overview of Reinforcement Learning with risk aversion inclusive of our proposed generalization for stochastic horizons. Section~\ref{sec:impl} specifies the concrete modelling choices we made for the numerical experiments, whose results are presented and commented in section~\ref{sec:numerics}. Section~\ref{sec:conclusion} contains some closing remarks.

\subsection{Related literature}\label{sec:literature}

Our work stands at the crossroads of the classical mathematical finance problem of hedging, financial applications of RL, and risk management of valuation adjustments. Each of such topics taken separately has been the subject of extensive literature, and here we only give a general picture.

On the classical side, hedging of contingent claims has been tackled without machine learning in countless studies. Particularly relevant to our setting are those considering transaction costs or correlation among risk drivers.

As for transaction costs, some authors just postulate the rules to build and dynamically adjust the hedging portfolio, and then concentrate on quantifying the impact on pricing: the earliest attempt in this direction was probably \citet{leland1985option}, followed by \citet{dewynne1994path} and many others up to the recent \citet{burnett2021hedging}. Other authors try to optimize the hedging strategy by stochastic control tools, getting either numerical or asymptotic solutions: e.g.~\citet{davis1993european, hodges1989optimal, whalley1997asymptotic, zakamouline2005optimal} just to mention a few. The objective is usually utility maximization, often defined on terminal wealth, although some authors try definitions based on the local PnL \citep[e.g.][]{kallsen1999utility} which are closer in spirit to our approach.

As for the impact of dependency, the most attention was drawn by the specific ``shadow-delta'' case of co-movements of an asset and its volatility \citep{crepey2004delta, vahamaa2004delta, bartlett2006hedging, alexander2007model, alexander2012regime, hull2017optimal}. A generalization to a generic set of factors with any cardinality was described in \citet{daluiso2017hedging}, which is also unusually close to our research in the choice of the numerical example: indeed, the case study is CVA hedging, although for an interest rate swap.

Turning to machine learning, hedging has been constantly mentioned since the earliest applications to finance \citep{malliaris1993neural, hutchinson1994nonparametric}. However, in such works the focus is typically on learning a pricing function, whose sensitivities are then used to build a traditional ``delta hedge''. In this respect, hedging performance is more a means to the construction of a error metric for the pricing, then a goal in itself. A quite recent review of this stream of literature can be found in \citet{ruf2020neural}.

On the other hand, it was not until recently that machine learning was attempted with hedging as the primary focus.
Again there is a major distinction between those optimizing a risk measure defined on final performance, and those which use a pricing model to define a daily performance. The first family includes the ``deep hedging'' series \citet{buehler2019deep, murray2022deep, mikkila2023empirical}. The present work belongs to the second family and is particularly aligned with the point of view of \citet{vittori2020option}.
It has also a relevant aspect in common with \citet{cao2021deep, cao2023gamma} in that they discuss the possibility to price with a model and simulate with a different one. Further papers which work on the the daily PnL include \citet{kolm2019dynamic, du2020deep, halperin2019qlbs, halperin2020qlbs}. We would also like to mention \citet{ruf2022hedging} who give a modern flavour to the dependence stream mentioned earlier: indeed, they give up the determination of a full dynamic strategy, and focus on the single-step hedging error, trying to improve on it by a machine learning based estimation of the ``shadow-delta''. 

The points of contact of RL with finance go far beyond hedging, touching for instance pricing, asset allocation, and optimal execution. We limit our review to the stream of literature which relates more to our work, concerned with proposals to introduce a trade-off between risk and expected return. This includes pure machine learning papers like \citet{morimura2010nonparametric, tamar2012policy, tamar2013variance, prashanth_actor-critic_2014, garcia15a, tamar2015policy, tamar_sequential_2017, chow2017risk}, and works which are more explicitly finance-oriented like \citet{moody2001learning, shen_risk-averse_2014}. The special measure of risk we warmly advocate for was introduced in \citet{Bisi2020trvo} and adopted in \citet{zhang2020mean, vittori2020option, bisi2021foreign, bisi2022risk, mandelli2023hedging}.

Finally, CVA management has been often discussed in practice-oriented literature \citep[e.g.][]{cesari2009modelling, brigo2013counterparty}, but virtually all published contributions on its hedging have stuck to sensitivities, whose efficient computation is already a challenging task due to the potentially large dimension of the gradient \citep{reghai2015cva, antonov2017pv, gnoatto2021deep, deelstra2022accelerated, giles2023efficient} and even larger size of the Hessian \citep{daluiso2023fast}. Except for the preliminary results on the present research project which were anticipated in \citet{daluiso2023Acva}, we are only aware of attempts to RL based hedging of valuation adjustments in the dissertations \citet{crotti2016reinforcement, vit2018reinforcement, tizzano2018direct, palmisano2019risk, locatelli2021two}, who got more ambiguous results then ours, partly because they tackle the tougher DVA which can be only proxy-hedged, and partly because of their very broad and completely data-driven ambitions.

\section{Financial environment}\label{sec:environment}

\subsection{Mathematical setting}\label{sec:setting}
In this section we set up the notation and mathematical formalism to describe the hedging problem.

The set of assets which can appear in the trading book is denoted by $\mathscr{H}$; in general, it includes the CVA which must be hedged and a finite collection of hedging and funding instruments. We use the symbol $h$ to denote a generic element of $\mathscr{H}$.

The set of currencies which appear in the book is denoted by $\mathscr{C}$. We use the symbol $c$ to denote a generic element of $\mathscr{C}$, while $\bar{c}$ is the main currency used to measure the performance of the trader; it's typically the home currency of his firm, in which the latter expresses its balance sheet. An exchange rate $\phi_t^{c_1c_2}$ represents the fair price in currency $c_2$ of a unit of currency $c_1$ for any $c_1,c_2\in\mathscr{C}$.

The number of units of asset $h$ held at time $t$ is denoted by $N_t^h$; for the purpose of this section it can be any real number, while constraints are added according to the specific application.

We associate to any asset $h$ a book valuation $X_t^h$ at time $t$, expressed in an asset specific currency $c_h$, and a set of dividend processes $(Y_t^{h,c})_{c\in\mathscr{C}}$ representing the cumulated cash flows of $h$ in currency $c$ which are paid up to time $t$ inclusive. Note that both $X_t^h$ and the set of cash flows $(Y_t^{h,c})_{c\in\mathscr{C}}$ can be expressed in the main evaluation currency $\bar{c}$ by the quantities
\begin{align}
	\bar{X}_t^h &:= \phi_t^{c_h\bar{c}}X_t^h,\nonumber \\%\label{eq:conversion_value}
	\bar{Y}_t^h &:= \sum_{c\in\mathscr{C}} \int_{t_0}^t \phi_s^{c\bar{c}} \diff{}Y_s^{h,c} = \sum_{c\in\mathscr{C}} \int_{t_0}^t \left(\phi_{s-}^{c\bar{c}} \diff{}Y_s^{h,c} + \diff{}[\phi^{c\bar{c}}, Y^{h,c}]_s \right).\label{eq:conversion_flows}
\end{align}
This allows a trivial definition for the book value
\begin{equation*}\label{eq:value}
	V_t := \sum_{h\in\mathscr{H}} N^h_t \bar{X}_t^h.
\end{equation*}

Finally, we allow for costs due to rebalancing of the book composition $\vc{N}_t := (N_t^h)_{h\in\mathscr{H}}$, e.g.~induced by trading an hedging instrument at a price which does not equal its theoretical book valuation. In this section we can stay fairly general and just define $T_t(\vc{N})$ as the cumulative costs of trading strategy $\vc{N}$ up to time $t$ exclusive, expressed in the evaluation currency $\bar{c}$; needless to say, in practical applications one needs a concrete model defining $T_t(\vc{N})$ as a function of the paths of market drivers. See section~\ref{sec:costs} for a possible simple choice.

Given the above definition, we can eventually define the cumulative gains of the book from an initial time $t_0$ as
\begin{equation}\label{eq:gain}
	G_t := \sum_{h\in\mathscr{H}}\int_{t_0}^t N_s^h \diff{}\bar{X}_s^h + \sum_{h\in\mathscr{H}} \int_{t_0}^t N_s^h \diff{}\bar{Y}_s^h - T_t(\vc{N}).
\end{equation}

\begin{remark}[Stochastic integration]
We deliberately overlooked mathematical technicalities on the definition of the stochastic integrals which appear in the above formulas and later. One can safely imagine that everything is defined on a filtered probability space satisfying the usual conditions, that the market processes $X^h$, $Y^{h,c}$, and $\phi^{c_1c_2}$ are all semi-martingales, and that $N^h$ is taken in a space of predictable processes such to guarantee its integrability against $\bar{X}^h$ and $\bar{Y}^{h}$. It makes no harm in practice to also assume that $Y^{h,c}$ is a finite variation process 
for each $h$ and $c$, so that the intermediate and more intuitive representation in \eqref{eq:conversion_flows} is conveniently well defined pathwise in Riemann-Stieltjes sense, because virtually any model of real securities describes cash flows with continuous yields $y^{h,c}_t\diff{}t$ and/or discrete dividends $\Delta Y^{h,c}_t\delta(t-T_i^{h,c})$.

Furthermore and even more to the point, the numerical optimization entails a discretization of the set of decision times, so that for all integrals where the integrator is a truly random process, the integrand is in fact piecewise constant: hence, to give a sense to all formulas, one only needs the trivial definition of the stochastic integral for a simple process.
\end{remark}

\subsubsection{Constrained and free variables}\label{sec:constraints}

The set of admissible allocations $\vc{N}$ is subject to a set of financial constraints which we list here below.
\begin{enumerate}
\item No CVA transfer: the CVA item remains in the book for the whole optimization horizon.\footnote{Another implicit assumption is that the netting set for which CVA is calculated remains the same. When in reality some deal is added or unwound, one can always compute the correction by difference and charge it to the client.} This translates into the constraint $N^{\CVA}_t \equiv 1$.
\item Collateral balances: a (possibly empty) subset of the book items $\mathscr{A}\subseteq\mathscr{H}$ is comprised of collateral accounts. For $a\in\mathscr{A}$, let us denote by $\mathscr{H}_a\subset\mathscr{H}$ the set of instruments which are collateralized by account $a$. Then the amount $N_t^a$ is not a free variable: it is a function $ N^a(\vc{N}^{\mathscr{H}_a})$ of 
\[ 
    \vc{N}^{\mathscr{H}_a}:= (N^h)_{h\in\mathscr{H}_a}
\]
and of the path of market drivers via the Credit Support Annex (CSA) rules\footnote{It might even depend on the activity of other traders on the same account.
We argue in section~\ref{sec:collateral_balance} that a model for the collateral avoiding such non-modellable complication should be accurate enough for our purposes.}. Again we stay general in this section, leaving the choice of a model for such a map in the numerical application.
\item Self financing: no cash is injected or withdrawn, so that the value of the book $V_t$ equals the cumulated gains $G_t$. Enforcing such constraint is very easy if one has free access to a bank account $\bar{h}$ in currency $\bar{c}$, which has the property $\bar{X}^{\bar{h}}_t \equiv 1$ for all $t$: indeed, in such case it is sufficient to force
\begin{equation*}%\label{eq:self_financing}
	N^{\bar{h}}_t = V_{t_0}+\sum_{h\in\mathscr{H}}\int_{t_0}^t N_s^h \diff{}\bar{X}_s^h + \sum_{h\in\mathscr{H}} \int_{t_0}^t N_s^h \diff{}\bar{Y}_s^h - T_t(\vc{N}) - \sum_{h\in\mathscr{H}\setminus\{\bar{h}\}} N^{h}_t \bar{X}^h_t
\end{equation*}
where we recall that $t_0$ denotes the initial time.
\end{enumerate}
We conclude that the free variables for optimization are
\[
	\vc{N}^{\mathscr{F}}_t := (N^h_t)_{h\in\mathscr{F}}, \qquad \mathscr{F} :=\mathscr{H}\setminus(\mathscr{A}\cup\{\CVA,\bar{h}\}).
\]

\subsection{Financial instruments: price and dividend processes}\label{sec:processes}
In this section we describe in some detail the relevant financial instruments and how they fit in the mathematical framework described in section~\ref{sec:setting}. In particular, we describe the book value processes $X^h$, the dividend processes $Y^{h,c}$, the collateral balance processes $N^a = N^a(\vc{N}^{\mathscr{H}_a})$, and the cost process $T = T(\vc{N})$. 

\subsubsection{Credit Valuation Adjustment}\label{sec:CVA}

We identify by $h=\CVA$ the book item representing the Credit Valuation Adjustment to be hedged.

The price process $X^{\CVA}_t$ is the risk-neutral $t$-conditional expectation $\bbE^{\bbQ_t}_t$ of the loss $\LGD_\tau < 0$ recorded at the counterparty default time $\tau$ \citep{cesari2009modelling}:
\begin{equation*}%\label{eq:CVA}
X^{\CVA}_t = I_{\tau > t}\bbE^{\bbQ_t}_t[D(t,\tau)\LGD_{\tau}]
\end{equation*}
computed with some pricing model $\bbQ_t$ and discount factor $D(t,\tau)$; note that it is by convention a non positive quantity. We allow the model $\bbQ_t$ to depend on time inconsistently with the real-world dynamics of $\LGD_{\tau}$, e.g.~because of periodic model recalibration, since pricing rules for a complex object like CVA typically gives up maximum realism due to model risk and computational feasibility concerns.

The dividend process $Y^{\CVA}$ is a jump process with a deterministic positive jump at $t=0$ equal to the counterparty risk premium paid by the counterparty to the bank (if any), and a random negative jump at default time $\tau$ equal to the loss given default $\LGD_{\tau}$. 

Details on $X^{\CVA}$ and $Y^{\CVA}$ for a concrete example portfolio can be found in the later section~\ref{sec:cva_fxfwd}.

\begin{remark}[Involved simplifications]
In the above modelling, we ignore the random delay between default time and default settlement; to put it differently, $\LGD_{\tau}$ should be more accurately interpreted as a valuation at time $\tau$ of the value of the investor's claim to the defaulted counterparty, at which time the position changes nature and its management gets outside the scope of this paper.
\end{remark}

\subsubsection{Credit Default Swaps}\label{sec:CDS}

We suppose that $\mathscr{H}$ can include Credit Default Swaps (CDS) on the CVA counterparty. By convention, we decide that $N_t^h > 0$ for such instruments indicates that the agent has a long position with respect to the risk (it sold protection).

The price process $X_t^h$ is the (dirty) price or upfront, which in real markets is not directly quoted, but is anyway computable from an observable quotation (the spread) through a conventional evaluation function \citep{isda2021cds}.

Denoting by $\tau$ the default time of the counterparty, the dividend process is a step function with positive jumps equal to the quarterly coupons $C$ on a schedule $T^{\CDS}_i$ contingent to counterparty survival, and a negative jump equal to the recovery flow $\Rec$ at default $\tau$:\footnote{We neglect the random delay from default time to default settlement.}
\begin{equation*}
Y^{h}_t = \sum_{T^{h}_i \leq t} C I_{\tau > T^{h}_i} - \ I_{\tau \leq t}(1-\Rec).
\end{equation*}

\begin{remark}[Involved simplifications]
In the above modelling, we suppose that the recovery flow is a constant equal to that adopted by market pricing conventions; more complex choices could be considered in future work, but we warn that they would be subject to relevant model risk, and may have little effect on the optimized trading strategy if deviations from the mean are mainly idiosyncratic and hence we cannot expect to offset them by dynamic hedging.
\end{remark}

\subsubsection{Cash accounts}\label{sec:cash}

We call ``cash account'' any element $b\in\mathscr{H}$ such that $X_t^b\equiv 1$ yielding a continuous dividend flow $\diff{}Y_t^b=r^b_t\diff{}t$.

We suppose that the basket $\mathscr{H}$ includes the following cash accounts:
\begin{itemize}
\item A funding account $f(c)\in\mathscr{H}$ for each currency $c\in\mathscr{C}$. Recall that the home funding $f(\bar{c})$ was already introduced in section~\ref{sec:constraints} with the notation $\bar{h}$;
\item The set $\mathscr{A} \subset \mathscr{H}$ of collateral accounts.
\end{itemize}

\begin{remark}[Involved simplifications]
In describing the above book items as idealized bank accounts, we are making the following minor simplifications:
\begin{enumerate}
\item We are postulating access to short-term funding in all currencies. In a typical bank, the treasury offers such facilities to traders, shadowing them from the optimization of the underlying external funding sources, which may involve FX instruments to mimic funding in foreign currencies.
\item Real funding accounts would pay interest no faster than daily. For simplicity, we suppose the payment frequency is high enough to be well approximated by a continuous stream.
\item Real collateral accounts typically compute earned interests on a daily frequency, but pay them at a lower frequency (typically monthly). Such delay may or may not be contractually compensated by a capitalization factor. Again we suppose for simplicity that such short delays are negligible.
\end{enumerate}
\end{remark}

\subsubsection{Collateral balance}\label{sec:collateral_balance}

For each collateral account $a$, we suppose that the collateral balance is updated instantaneously, so to equal the fair value of the underlying exposure in the collateral currency $c_a$ at each time $t$. In formulas:
\begin{equation*}%\label{eq:collateral_balance}
N^a_t(\vc{N}^{\mathscr{H}_a}) = -\sum_{h\in\mathscr{H}_a} N_t^h X_t^{h,c_a}.
\end{equation*}

\begin{remark}[Involved simplifications]
In the above modelling, we are making the following simplifications:
\begin{enumerate}
\item In practice, the collateral balance is updated in discrete time. However, the frequency of updates is high (daily), so that it should be well approximated by instantaneous resets.
\item Real CSA agreements include rules such as minimum transfer amounts, which make the exact amount of the collateral balance a path-dependent object. We deem them negligible details.
\item Real CSA agreements often include over-collateralization terms, e.g.~in the form of initial margins \citep{andersen2018margin}, which make the collateral balance a non-linear object. Such effects would be bank-wide and would therefore destroy the possibility to optimize the hedging netting set by netting set, which is highly desirable. This motivates our choice to exclude such terms from our simulator for the moment being. Including a linear proxy might be a possible future improvement.
\item Some CSA agreements allow to post multiple currencies or even non-cash instruments as collateral. Collateral optimization is a separate complex task, so that it is unreasonable to expect our agent to take it into consideration. Therefore, we suppose that the collateral rate $r^a$ already includes an estimate of the best yield one can get from this optionality.
\end{enumerate}
\end{remark}

\subsubsection{Rebalancing costs}\label{sec:costs}

To model rebalancing costs, we suppose that the following market operations are used to rebalance the portfolio:
\begin{itemize}
\item Non-cash assets ($h \neq f(c)$ for all $c\in \mathscr{C}$) can be exchanged for cash in the asset currency $c_h$. For those which carry negligible transaction costs we put $T_t^h \equiv 0$; for the other ones, we assume that the market quotes a bid price $X_t^{h,-}$ and an ask price $X_t^{h,+}$ valid for small transactions, and define the mid price and semi-spread as
\begin{equation}\label{eq:bidask}
    X^h_t = \frac{X^{h,+}_t+X^{h,-}_t}{2}, \qquad
    \gamma^h_t = \frac{X^{h,+}_t-X^{h,-}_t}{2}.
\end{equation}
With these notations, we adopt the following model for the costs due to transactions in $h$, denominated in currency $c_h$:
\begin{equation}\label{eq:costs}
	T_t^h = \int_{t_0}^t \gamma^h_s |\diff{}N^h_s| + \sum_{s\leq t} \alpha^h \gamma^h_s |\Delta N^h_s|^2,
\end{equation}
where non infinitesimal transactions get a worse price due to an impact term whose proportionality to trade size is measured by the constant~$\alpha^h$.\footnote{Note that $|dN^h_s|$ denotes total variation: in particular the $h$ component of the trading strategy $\vc{N}$ must be a finite variation process for all assets $h$ with non negligible costs.} Its conversion to home currency equals
\begin{equation*}%\label{eq:costs_home}
\bar{T}_t^h := \int_{t_0}^t \phi_s^{c_h\bar{c}} \diff{}T_s^h = \int_{t_0}^t \phi_s^{c_h\bar{c}} \gamma^h_s |\diff{}N^h_s| + \sum_{s\leq t} \phi_s^{c_h\bar{c}} \alpha^h \gamma^h_s |\Delta N^h_s|^2.
\end{equation*}
\item Foreign funding accounts $f(c)$ for $c \neq \bar{c}$ can be converted to domestic cash $\bar{h}$ with spot operations on the foreign exchange market. The cumulated notional $F^c_t$ in $c$ currency of such operations up to time $t$ excluded must be computed net of all other $c$ denominated cash flows, so it satisfies by definition
\[
	N_{t-}^{f(c)} = F^c_t - \sum_{\substack{h\in \mathscr{H}\setminus\{f(c)\}\\c_h = c}} T_t^h + \sum_{h\in \mathscr{H}} \int_{t_0}^{t-} N^h_s \diff{}Y_s^{h,c},
\]
which implicitly defines $F^c_t$. Now in analogy with \eqref{eq:bidask}-\eqref{eq:costs}, we introduce bid and ask values for the exchange rate, respectively $\phi^{c\bar{c},-}$ and $\phi^{c\bar{c},+}$, and put
\begin{gather}
    \phi^{c\bar{c}}_t = \frac{\phi^{c\bar{c},+}_t+\phi^{c\bar{c},-}_t}{2}, \qquad
    \gamma^c_t = \frac{\phi^{c\bar{c},+}_t-\phi^{c\bar{c},-}_t}{2}\\
	\bar{T}_t^{f(c)} = \int_{t_0}^t \gamma^c_s |\diff{}F^c_s| + \sum_{s\leq t} \alpha^c \gamma^c_s |\Delta F^c_s|^2.
\end{gather}
Note that $\bar{T}_t^{f(c)}$ is already expressed in evaluation currency $\bar{c}$, since we supposed that all FX transactions are against it.
\end{itemize}

Finally, we can compute the total transaction costs as
\[
	T_t(\vc{N}) := \sum_{h \in \mathscr{H}} \bar{T}_t^h.
\]

\begin{remark}[Involved simplifications]
In the above modelling, we are supposing that foreign currencies are always exchanged against the reference currency and never between themselves. This is mainly for notational convenience, and does not affect our numerical study, which involves only one foreign currency.
\end{remark}

\subsection{CVA of an FX forward}\label{sec:cva_fxfwd}

In this section we specialize the CVA item of section \ref{sec:CVA} for the case of an FX forward. In particular, we introduce the payoff and market pricing of FX forwards, then we provide some details of the CVA computation.

An FX forward is a financial contract according to which two counterparties agree to exchange at a future date $T$ an amount $N^{c_1}$ in a currency $c_1$ against an amount $N^{c_2}$ in currency $c_2$. Both amounts are known at contract inception. We adopt the point of view of the counterparty receiving the $c_1$-denominated flow, so that the fair value of the contract at time $t<T$ is
\begin{equation*}%\label{eq:exposure_1}
    E_t := \bbE_t^{\bbQ} \left[D^{\bar{c}}(t,T) \left(N^{c_1}\phi^{c_1\bar{c}}_T - N^{c_2}\phi^{c_2\bar{c}}_T\right)\right]
\end{equation*}
where $D^{\bar{c}}(t,T)$ denotes pathwise discounting from time $T$ to time $t<T$ using some reference interest rate in currency $\bar{c}$, and $\bbQ$ is a consensus pricing model. It is customary to express this price as
\begin{equation}\label{eq:exposure_2}
    E_t := \phi^{c_1\bar{c}}_t P_t^{c_1}(T) N^{c_1} - \phi^{c_2\bar{c}}_t P_t^{c_2}(T) N^{c_2}
\end{equation}
where the curves $P^{c_i}_t$ in currency $c_i$ at time $t$ satisfy by construction
\[
    P^{c_i}_t(T)\phi^{c_1\bar{c}}_t = \bbE^{\bbQ}\left[D^{\bar{c}}(t,T) \phi^{c_1\bar{c}}_T \right] \qquad \forall T \geq t.
\]

From now on, we consider the CVA of an uncollateralized portfolio consisting of a single FX forward where $c_2$ is the evaluation currency $\bar{c}$, as a prototypical example of a position whose counterparty risk depends on the dynamics of exchange rates. As is customary in practice, we suppose that the loss at default $\LGD_\tau$ is a constant fraction $(1-\Rec)$ of the positive exposure $E_{\tau}^+$ at default time $\tau$:
\[
    \LGD_\tau := -(1-\Rec) \max(E_{\tau},0)
\]
where $\Rec$ is the same conventional recovery fraction used in CDS pricing and introduced in section~\ref{sec:CDS}.

The book valuation of the CVA at time $t$ requires the specification of a joint $\bbQ_t$ model for the factors which determine the default time $\tau$ and the price \eqref{eq:exposure_2} at such time. A typical choice would be a Cox model for instantaneous default probabilities, so that 
\begin{equation}\label{eq:cva_cox}
    \CVA_t := -(1-\Rec) \bbE_{t}^{\bbQ_t} \left[\int_t^T D^{\bar{c}}(t,s) \max(E_{s},0) \lambda_s e^{-\int_t^s \lambda_u \,\diff{}u} \,\diff{}s  \right].
\end{equation}
for an instantaneous default intensity process $\lambda_s$.

Formula \eqref{eq:cva_cox} allows for dependence between default intensities and exposures; it is in fact not uncommon to assume them independent in pricing, so that the CVA item simplifies to
\begin{equation}\label{eq:cva_indep}
    \CVA_t := -(1-\Rec) \int_t^T \bbE_{t}^{\bbQ_t} \left[D^{\bar{c}}(t,s) \max(E_{s},0) \right] \bbQ_t(\tau \in \diff{}s).
\end{equation}
One advantage of such approach is that both the integrand and the unconditional risk neutral default probabilities $\bbQ_t(\tau \in ds)$ can be bootstrapped from reasonably liquid quotes. In fact, if one is willing to overlook the stochasticity of interest rates in CVA pricing, then
\begin{equation}\label{eq:cva_determ_rates}
\bbE_t^{\bbQ_t}\left[D^{\bar{c}}(t,s) \max(E_{s},0) \right] = P^{\bar{c}}_t(s) P_s^{c_1}(T) N^{c_1} \bbE_t^{\bbQ_t}\left[\max\left(\phi^{c_1\bar{c}}_s - \frac{P_s^{\bar{c}}(T) N^{\bar{c}}}{P_s^{c_1}(T) N^{c_1}}, 0\right)\right]
\end{equation}
can be interpreted as a deterministic multiple of the payoff of a vanilla call on the FX rate $\phi^{c_1\bar{c}}_s$ with suitable strike, and directly marked to the market of standard FX options at time $t$.

\begin{remark}[Coherence of simulator and pricer] While our modelling of the trader's portfolio in section~\ref{sec:processes} is as accurate as possible in terms of costs and as flexible as possible in terms of the probability law of the risk drivers, the pricing setup ignores costs altogether and introduces some perhaps crude simplifications one might adopt in the pricing model $\bbQ_t$. This paradoxically enhances the realism of our description. Indeed, nobody guarantees that the profit-and-loss of the trader will be measured with a risk-neutral model which is a perfect description of reality, and what the trader wants to model \emph{accurately} is exactly the distribution of this exogenously defined and maybe \emph{simplified} PnL. Of course they may later induce the firm to improve the pricing model; but we aim to investigate numerically whether the agent can learn a decent hedging strategy in terms e.g.~of hedging costs and correlations even when the book value of CVA ignores both aspects.
\end{remark}

\section{Reinforcement Learning and its application to hedging}\label{sec:RL}

Reinforcement Learning (RL) \citep{sutton1988learning} is a machine learning framework for sequential decision-making processes, and as such we deem it suitable for our CVA hedging problem. In the present section we briefly introduce classical RL concepts and some recent risk-averse variants thereof, to adapt them to a random finite horizon setting like ours. Finally, we explain how an hedging task can be formulated as an RL problem.

\subsection{Markov Decision Process}\label{sec:MDP}
The basic building block to apply RL algorithms to a problem is a description of the latter as a Markov Decision Process (MDP) \citep{puterman2014markov}.

\begin{definition}[Markov Decision Process]
A discrete-time MDP is defined as a 6-tuple $\mathcal{M} = \langle\Sspace,\Aspace, \mathcal{P}, \mathcal{R}, \gamma, \mu \rangle$, where: 
\begin{itemize}
\item $\Sspace$ is a non-empty measurable space called state space;
\item $\Aspace$ is a non-empty measurable space called action space;
\item $\mathcal{P}:\Sspace\times \Aspace \rightarrow P(\Sspace\times\mathbb{R})$ is the transition model that assigns to each state-action pair $(s,a)$ the probability measure $\mathcal{P}(\cdot|s,a)$ of the next state;
\item $\mathcal{R}:\Sspace\times \Aspace \times \Sspace\rightarrow P(\mathbb{R})$ is a bounded reward model, which assigns for every triple $(s,a,s')$ a probability measure $\mathcal{R}(\cdot|s,a,s')$; 
\item $\gamma\in[0,1]$ is the discount factor, used to weight future rewards;
\item $\mu$ is the initial state distribution, from which the starting state is sampled. 
\end{itemize}
\end{definition}

The interpretation of the above objects is the following.

\begin{description}

\item[State] The state $s \in \Sspace$ usually contains all the information the agent perceives from the environment. In a finite horizon setting like ours, it should also include a timestamp.

\item[Action] The action $a \in \Aspace$ represents how the agent interacts with the environment.

\item[Dynamics] $\mathcal{P}(\diff{}s'|s,a)$ is the probability of reaching state $s'$ given that we are in state $s$ and take action $a$. The environment dynamics fulfil the Markov Property, which means that state transitions depend only on the most recent state and action and not on previous history.

\item[Reward] $\mathcal{R}(\diff{}r|s,a,s')$ is the probability that the contribution to the agent's goal received when performing action $a\in \mathcal{A}$ in state $s \in \mathcal{S}$ and arriving in state $s' \in \mathcal{S}$ is equal to $r$.
We can also consider its expectation given the transition:
\[
	r(s,a,s') = \int_\mathbb{R}r\mathcal{R}(\diff{}r|s, a, s'),
\]
and also an expected reward independent of the next state $s'$, by computing an expectation over the next state:
\[
	r(s,a)=\int_\mathcal{S}r(s,a,s')\mathcal{P}(\diff{}s'|s,a).
\]

\end{description}

\subsection{Value functions and Bellman equations} \label{sec:value_fun}
We consider finite horizon problems in which future rewards are exponentially discounted with~$\gamma$. 
Let us define a trajectory as a sequence of states, actions, and rewards, up to a stopping time $\eps$:
\begin{equation}\label{eq:trajectory}
(s_{0}, a_{0}, r_{1}, s_{1}, a_{1}, r_{2}, ..., s_{\eps-1}, a_{\eps-1}, r_{\eps}).
\end{equation}

\begin{remark}[Termination time]
The episode termination time-step $\eps$ can be modelled without loss of generality as the first step at which the state $s_\eps$ would enter an absorbing termination region $\Tspace\subset \Sspace$, so that its law is included in the definition of $\mathcal{P}$. We suppose that $\eps>0$ $\mu$-almost surely; moreover, if $\gamma = 1$, we require that $\eps$ is almost surely finite for every choice of $\pi$.\footnote{For $\gamma<1$ one can be easily weaken the requirement thanks to the exponential decay of $\gamma^N$, but this is irrelevant for our purposes.} In our application, $\eps$ corresponds to the earlier between counterparty default and a maximum optimization horizon, which may coincide or not with portfolio maturity.
\end{remark}

We then define the discounted sum of the rewards of a trajectory:
\begin{equation*}%\label{eq:ret_definition}
    \mathcal{G} = \sum_{i=1}^{\eps} \gamma^{i-1} r_{i}.
\end{equation*}
Each trajectory is generated by following a \textit{policy} $\pi: \Sspace\rightarrow P(\Aspace)$, mapping each state $s$ to a probability distribution $\pi(\cdot|s)$ on actions. This formalizes mathematically a (possibly randomized) strategy the agent plays to select the action at each time-step. It determines the distribution of trajectories \eqref{eq:trajectory}: therefore, expected values over such distribution is denoted by the symbol~$\EVp$.

Given that both the policy and the transition probability may be stochastic, we are interested in the expected value of the return given all the possible trajectories, also known as the value function.
\begin{definition}[State value function or V-function]
Let $\mathcal{M}$ be an MDP and $\pi$ a policy. For every state $s \in \Sspace$, the state value function $V_\pi:\Sspace \rightarrow \mathbb{R}$ is defined as the expected return starting from state $s$ and following policy $\pi$:
\begin{equation*}%\label{eq:V_fun}
    V_\pi(s) \coloneqq \EVp \left[\mathcal{G}\Big|s_0 = s\right],
\end{equation*}
\end{definition}

\begin{remark}[Bellman equation for $V$]  $V_\pi(s) = 0$ trivially for $s\in\Tspace$, while for the non-trivial case $s\notin\Tspace$, the state value function can be recursively determined by \citep{bellman1966dynamic}:
\begin{equation*}%\label{eq:bellman}
	V_\pi(s) = \EV_{a \sim\pi(\cdot|s)}\left[r(s,a) +\gamma \EV_{s'\sim \mathcal{P}(\cdot|s,a)}\big[V_\pi(s')\big]\right].
\end{equation*}
\end{remark}

Similarly, if we consider starting from a specific state and taking a specific action, we can define the state-action value function.
\begin{definition}[State-action value function or Q-function]
Let $\mathcal{M}$ be an MDP and $\pi$ a policy. For every state-action pair $(s,a)\in \Sspace \times \Aspace$, the state-action value function $Q^\pi:\Sspace\times \Aspace \rightarrow \mathbb{R}$ is defined as the expected return starting from state $s$, playing action $a$ and following policy $\pi$:
\begin{equation}\label{eq:Q_fun}
	Q_\pi(s,a) \coloneqq \EVp\left[\mathcal{G}|s_0 = s, a_0 = a\right]
\end{equation}
\end{definition}

\begin{remark}[Bellman equation for $Q$]
Again for the non-trivial case $s\notin\Tspace$, the state-action value function can be recursively defined by the following \citep{bellman1966dynamic}:
\begin{equation}\label{eq:bellman}
	Q_\pi(s,a) = r(s,a) + \gamma \EV_{\substack{s'\sim \mathcal{P}(\cdot|s,a)\\a'\sim\pi(\cdot|s')}}\big[Q_\pi(s',a')\big].    
\end{equation}
\end{remark}

Finally, the objective in \textit{risk neutral} RL is the maximization of the value function, given an initial state distribution.

\begin{definition}[Unnormalized expected return]
Let $\mathcal{M}$ be an MDP and $\pi$ a policy. Given the initial state distribution $\mu$, the unnormalized expected return $\hat{J}_\pi$ is defined as the expectation of the return starting from a $\mu$-distributed initial state $s_0$ and following policy $\pi$:
\begin{align*}%\label{eq:nn_objective}
	\hat{J}_\pi \coloneqq \EVp_{s_0\sim \mu}\left[\mathcal{G}\right].
\end{align*}
\end{definition}

We also introduce its normalized version.

\begin{definition}[Normalized expected return]
Let $\mathcal{M}$ be an MDP and $\pi$ a policy. Given the initial state distribution $\mu$, the normalized expected return $J_\pi$ is defined as the expectation of the discount factor-weighted average of the rewards starting from a $\mu$-distributed initial state $s_0$ and following policy $\pi$:
\begin{align*}%\label{eq:objective}
    J_\pi & \coloneqq \EVp_{s_0\sim \mu}\left[\Gamma^{-1}\mathcal{G}\right], \qquad \Gamma = \sum_{i=1}^{\eps} \gamma^{i-1}.
    % \\
    % &\equiv \int_\Sspace \mu(s)V_\pi(s)ds. \nonumber
\end{align*}
\end{definition}

\begin{remark}[Differences from infinite horizon]
The normalization factor $\Gamma$ is chosen so that the weighting $\left(\gamma^{i-1}\Gamma^{-1}\right)_{i=1,\dots,\eps}$ is a probability measure on the set of time steps $\{1,\dots,\eps\}$.
The original definition was written for an infinite horizon problem, hence it required $\gamma<1$ and used $\Gamma = (1-\gamma)^{-1}$; the above is adapted to a finite random horizon, allowing for $\gamma\leq 1$ and giving
\[
	\Gamma =\frac{1-\gamma^\eps}{1-\gamma} \text{ if }\gamma < 1, \qquad \Gamma = \eps \text{ if }\gamma = 1.
\]
\end{remark}

\subsection{Volatility aversion}\label{sec:TRVO}
A number of modified risk-aware objectives have been studied, for example introducing a trade-off with the minimization of variance of the returns, in a mean-variance \citep{tamar2013variance, prashanth_actor-critic_2014} or Sharpe ratio \citep{moody2001learning} fashion. Others have studied the minimization of CVaR or more generally of a coherent risk measure \citep{tamar_sequential_2017}.

Nevertheless, all these approaches consider only the minimization of the long-term risk, while in financial trading interim results are also fundamental, and keeping a low-varying intermediate PnL becomes crucial. Moreover, the analytical intractability of all these formulation does not allow the related algorithms to perform (in terms of learning improvements) as the state-of-the-art algorithms in the standard RL framework, such as Trust Region Policy Optimization (TRPO)~\citep{schulman2015trust}. For these reasons, \citet{Bisi2020trvo} introduced a new measure of risk, which takes into account the variance of the reward at each time-step with respect to state visitation probabilities:
\begin{definition}[Unnormalized reward volatility]
The unnormalized reward volatility a.k.a  unnormalized reward variance is expressed as:
\begin{equation*}%\label{eq:reward_variance}
	\hat{\nu}^2_\pi = \EVp_{{s_0 \sim \mu }}\left[\sum_{i=1}^{\eps} \gamma^{i-1} \left(r_i-J_\pi\right)^2\right]
\end{equation*}
\end{definition}

\begin{remark}[Similarity to PnL volatility]
The argument of the expected value is akin to a the variance of the immediate rewards across \emph{time}, but for the fact that the residuals are expressed with respect to a value $J_\theta$ which is an average across both population and time. As such, it is comparable to the concept of profit-and-loss volatility used in financial practice.
\end{remark}

\begin{remark}[Differences from infinite horizon]
A factor $\Gamma^{-1}$ inside the expected value would make $\hat{\nu}_\pi^2$ a true variance across population and time, giving the (normalized) return volatility $\nu_\pi^2$ defined in the original infinite horizon formulation. Here we remove that factor because we believe that with stochastic episode lengths, an underweighting of returns belonging to longer episodes would be financially inappropriate. We also suspect that if we kept that factor, we could not have a result like theorem~\ref{thm:gradient}, which is crucial for the learning algorithm. Note that when $\Gamma$ is deterministic as in the original paper, putting or removing a multiplier in the definition of $\nu_\pi^2$ is just a matter of notation, since it can be absorbed in the risk aversion coefficient $\beta$ defined here below.
\end{remark}

In most trading and even hedging applications, achieving a profit is at least as relevant as being risk-averse thus, we decide to consider an objective that handles the risk-return trade-off through a risk aversion coefficient, the parameter $\beta$. The objective related to the policy $\pi$ can be defined as:
\begin{equation*}%\label{eq:eta}
	\hat{\eta}_\pi \coloneqq \hat{J}_\pi - \beta \hat{\nu}^2_\pi,
\end{equation*}
called \textit{unnormalized mean-volatility} hereafter, where $\beta\geq 0$ allows to trade-off expected return maximization with risk minimization. Once more, the original normalized goal $\eta_\pi \coloneqq J_\pi - \beta \nu^2_\pi$ is just a multiple of ours in deterministic horizon settings. On the other hand, the more standard mean-variance objective is $\hat{J}_{\pi} - \beta\sigma_{\pi}^2$, where $\sigma_{\pi}$ is the \emph{return} variance as defined in~\citet{tamar2013variance}:
\begin{definition}[Return variance]
The return variance is expressed as:
\begin{equation*}%\label{eq:path_variance}
	\sigma_\pi^2 \coloneqq \EVp_{s_0 \sim \mu}\left[\left( \mathcal{G}-\hat{J}_\pi\right)^2\right].
\end{equation*}
\end{definition}

Here below we generalize an important result on the relationship between the two variance measures:
\begin{lemma}[Variance inequality]\label{thm:variance_ineq}
The following inequality holds:
\begin{equation*}
 \sigma_\pi^2 \leq \esssup(\Gamma)\hat{\nu}^2_\pi.
\end{equation*}
\end{lemma}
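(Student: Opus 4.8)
The plan is to prove the bound pathwise --- trajectory by trajectory --- via a weighted Cauchy--Schwarz inequality, and then integrate. This mirrors the classical argument of \citet{Bisi2020trvo} for deterministic horizons, the only structural change being that the (there constant) discount sum is replaced by its essential supremum.

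Concretely, fix a trajectory of random length $\eps$. The weights $\gamma^{i-1}$, $i=1,\dots,\eps$, are nonnegative and sum to $\Gamma$, so Cauchy--Schwarz applied to the vectors with components $1$ and $r_{i}-J_\pi$ with respect to these weights gives
\[
\left(\sum_{i=1}^{\eps}\gamma^{i-1}(r_{i}-J_\pi)\right)^{2}\le\left(\sum_{i=1}^{\eps}\gamma^{i-1}\right)\left(\sum_{i=1}^{\eps}\gamma^{i-1}(r_{i}-J_\pi)^{2}\right)=\Gamma\sum_{i=1}^{\eps}\gamma^{i-1}(r_{i}-J_\pi)^{2}.
\]
On the left I recognise $\sum_{i=1}^{\eps}\gamma^{i-1}(r_{i}-J_\pi)=\mathcal{G}-\Gamma J_\pi$; on the right I bound $\Gamma\le\esssup(\Gamma)$ $\pi$-almost surely. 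Taking $\EVp_{s_{0}\sim\mu}$ of the resulting pathwise inequality and invoking the definition of $\hat{\nu}^{2}_\pi$ yields
\[
\EVp\!\left[(\mathcal{G}-\Gamma J_\pi)^{2}\right]\le\esssup(\Gamma)\,\hat{\nu}^{2}_\pi.
\]

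It then remains to show $\sigma_\pi^{2}=\EVp[(\mathcal{G}-\hat{J}_\pi)^{2}]\le\EVp[(\mathcal{G}-\Gamma J_\pi)^{2}]$. When $\eps$ is deterministic this is an \emph{equality}, since $\Gamma$ is constant and $\Gamma J_\pi=\hat{J}_\pi$, so the statement collapses to the known one; for random $\eps$ the centering $\Gamma J_\pi$ is genuinely random, and the comparison must be extracted from the mean-square optimality of the scalar $\hat{J}_\pi=\EVp[\mathcal{G}]$ among constants, together with the identity $\EVp[\Gamma^{-1}\mathcal{G}]=J_\pi$ linking $J_\pi$ back to $\mathcal{G}$. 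Chaining the two displays then closes the proof.

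I expect the pathwise Cauchy--Schwarz step and the bound $\Gamma\le\esssup(\Gamma)$ to be routine; the genuine obstacle --- the only place where the stochastic horizon really bites --- is this last comparison $\sigma_\pi^{2}\le\EVp[(\mathcal{G}-\Gamma J_\pi)^{2}]$, which is trivial for deterministic $\eps$ but for random $\eps$ requires a careful argument around the random versus scalar centering.
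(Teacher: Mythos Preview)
Your pathwise Cauchy--Schwarz bound
\[
\EVp\!\left[(\mathcal{G}-\Gamma J_\pi)^{2}\right]\le\esssup(\Gamma)\,\hat{\nu}^{2}_\pi
\]
is exactly the paper's Jensen step rewritten, so on that part the two arguments coincide. The paper then simply writes $\sigma_\pi^{2}=\EVp[(\mathcal{G}-\Gamma J_\pi)^{2}]$ as an \emph{equality} in its first displayed line; you are right to flag that this identity only holds when $\Gamma$ is deterministic, and to try to replace it by the inequality $\sigma_\pi^{2}\le\EVp[(\mathcal{G}-\Gamma J_\pi)^{2}]$.

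The gap is that your proposed justification for that inequality cannot work. Mean-square optimality of $\hat J_\pi$ gives $\EVp[(\mathcal{G}-\hat J_\pi)^{2}]\le\EVp[(\mathcal{G}-c)^{2}]$ only for \emph{constants} $c$, whereas $\Gamma J_\pi$ is random when $\eps$ is; the auxiliary identity $\EVp[\Gamma^{-1}\mathcal{G}]=J_\pi$ does not bridge this. In fact the inequality you need is false in general. Take $\gamma=1$, all rewards identically equal to $1$, and $\eps\in\{1,2\}$ each with probability $\tfrac12$ (a two-state MDP with a coin-flip termination suffices). Then $\mathcal{G}=\Gamma=\eps$ pathwise, so $J_\pi=\EVp[\Gamma^{-1}\mathcal{G}]=1$ and $\mathcal{G}-\Gamma J_\pi\equiv 0$, giving $\EVp[(\mathcal{G}-\Gamma J_\pi)^{2}]=0$; yet $\hat J_\pi=\tfrac32$ and $\sigma_\pi^{2}=\mathrm{Var}(\eps)=\tfrac14>0$. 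The same example yields $\hat\nu_\pi^{2}=0$, so it also violates the lemma as stated and breaks the paper's own proof at precisely the step you singled out. Your instinct that ``the stochastic horizon really bites'' here is correct; it bites hard enough that no amount of care will close this particular route.
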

\begin{proof}
By factoring $\Gamma$, we can interpret the inner summation on $i$ in the below equation as an expected value, and apply to it the Jensen inequality:
\begin{align*}
\sigma_\pi^2 &= \EVp_{s_0 \sim \mu}\left[\Gamma^2\left(\sum_{i=1}^{\eps}\Gamma^{-1}\gamma^{i-1}(r_i - J_\pi)\right)^2 \right]\\
                   & \leq \EVp_{s_0 \sim \mu}\left[\Gamma^2\sum_{i=1}^{\eps}\Gamma^{-1}\gamma^{i-1}\left(r_i - J_\pi\right)^2 \right] \leq \esssup(\Gamma) \hat{\nu}^2_\pi.
\end{align*}
\end{proof}

\begin{remark}[On the essential supremum]
The $\esssup$ in the Lemma is for sure finite if the maximum length of an episode (under policy $\pi$) is finite, as in our application. Otherwise, it is finite only if $\gamma<1$; in which case it is smaller than $(1-\gamma)^{-1}$.
\end{remark}

The $\Gamma$-related factor just comes from the fact that the return variance is not normalized, unlike the reward volatility. Apart from that, the key difference between the two measures is the different role of inter-temporal correlations between the rewards.
However, Lemma~\ref{thm:variance_ineq} shows that the minimization of the reward volatility yields a low return variance. The opposite is clearly not true: as counterexample, it is possible to consider a stock price with the same value at the beginning and at the end of the investment period, but making complex movements in-between.

It turns out that the mean-volatility objective, besides being closer to what practitioners actually monitor, is also analytically more tractable than a risk measure on cumulated rewards. Indeed, one can derive linear Bellman equations, in a similar form as in equation~\eqref{eq:bellman}, and a policy gradient theorem, analogous to the standard RL result. 

Indeed, we can introduce a volatility equivalent of the action-value function $Q_\pi$ \eqref{eq:Q_fun}, called \textit{action-volatility} function, which is the volatility observed by starting from state $s$, taking action $a$, and following policy $\pi$ thereafter:
\begin{equation*}%\label{eq:X_fun}
X_\pi(s,a) \coloneqq \EVp\left[\sum_{i=1}^{\eps} \gamma^{i-1} (r_i - J_\pi)^2|s=s_0,a=a_0\right].
\end{equation*}
Like the $Q_\pi$ function, this can be written recursively by means of a Bellman equation:
\begin{equation}\label{eq:bellmanX}
	X_\pi(s,a) = \int_\mathcal{S}\mathcal{P}(\diff{}s'|s,a)\int_{\mathbb{R}}\mathcal{R}(\diff{}r|s,a,s')\left[r-J_\pi\right]^2 + \gamma \EV_{\substack{s'\sim P(\cdot|s,a)\\a'\sim\pi(\cdot|s')}}\big[X_\pi(s',a')\big],
\end{equation}
where often the first integral is well approximated by $\left[r(s,a)-J_\pi\right]^2$.

Now we can finally derive the policy gradient:
\begin{theorem}\label{thm:gradient}
Consider policies which are absolutely continuous with respect to a reference measure $\bar{\pi}$, such that the density $\pi_\theta(\cdot|s)$ depends differentiably on real-valued parameters~$\theta$. % Moreover, suppose that $\eps$ does not depend on the policy (non mi sembra che serva!)
Then (under technical hypotheses: see the next Remark) we can express the gradient of the squared reward volatility as a pathwise expected value:
\begin{equation}\label{eq:gradient}
	\nabla_\theta\hat{\nu}_\pi^2 = \EVp \left[ \sum_{i=0}^{\eps-1} \gamma^{i}X_\pi(s_i,a_i)\nabla_\theta\log \pi_{\theta}(a_i|s_i) \right].
\end{equation}
\end{theorem}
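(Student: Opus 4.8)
The plan is to read $\hat{\nu}^2_\pi$ as an ordinary discounted return for a \emph{surrogate} problem: same transition model $\mathcal{P}$ and same parametric policy $\pi_\theta$, but with reward $\tilde r_i:=(r_i-J_\pi)^2$, the centering level $J_\pi$ being treated as a fixed reference constant (consistent with how it enters the learning algorithm, where it is estimated separately rather than differentiated). Under this reading the action-value of the surrogate problem is precisely $X_\pi$, its Bellman equation is \eqref{eq:bellmanX}, and \eqref{eq:gradient} is exactly the policy gradient theorem for the surrogate; so the work reduces to proving the classical policy gradient theorem in our random, $\mu$-almost-surely finite horizon.

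I would carry this out through the Bellman route, since \eqref{eq:bellmanX} is already in hand. Introduce the volatility state-value $U_\pi(s):=\EVp[\sum_{i=1}^{\eps}\gamma^{i-1}\tilde r_i\,|\,s_0=s]$, so that $\hat{\nu}^2_\pi=\EVp_{s_0\sim\mu}[U_\pi(s_0)]$, $U_\pi\equiv0$ on the absorbing set $\Tspace$, and $U_\pi(s)=\EV_{a\sim\pi_\theta(\cdot|s)}[X_\pi(s,a)]$ for $s\notin\Tspace$. Differentiating this identity and the Bellman equation \eqref{eq:bellmanX} in $\theta$, applying $\nabla_\theta\int\pi_\theta(a|s)f(a)\,\diff{}a=\int\pi_\theta(a|s)f(a)\,\nabla_\theta\log\pi_\theta(a|s)\,\diff{}a$ wherever $\theta$ hits the policy and $\nabla_\theta\tilde r_i=0$ wherever it would hit the (frozen) centering, yields the recursion
\[
\nabla_\theta U_\pi(s)=\EV_{a\sim\pi_\theta(\cdot|s)}\!\left[X_\pi(s,a)\,\nabla_\theta\log\pi_\theta(a|s)\right]+\gamma\,\EV_{a\sim\pi_\theta(\cdot|s),\,s'\sim\mathcal{P}(\cdot|s,a)}\!\left[\nabla_\theta U_\pi(s')\right].
\]
Unrolling this along trajectories, the boundary condition $\nabla_\theta U_\pi\equiv0$ on $\Tspace$ together with the almost-sure finiteness of $\eps$ when $\gamma=1$ (the geometric factor $\gamma^n$ sufficing when $\gamma<1$) sends the tail to zero, leaving $\nabla_\theta U_\pi(s)=\EVp[\sum_{i=0}^{\eps-1}\gamma^{i}X_\pi(s_i,a_i)\nabla_\theta\log\pi_\theta(a_i|s_i)\,|\,s_0=s]$; averaging over $s_0\sim\mu$ gives \eqref{eq:gradient}. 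An equivalent route is the direct likelihood-ratio computation: write $\hat{\nu}^2_\pi$ as an integral against the trajectory law, differentiate using $\nabla_\theta\log P_\theta(\tau)=\sum_{k<\eps}\nabla_\theta\log\pi_\theta(a_k|s_k)$, and discard the ``acausal'' cross-terms (a score at step $j$ multiplying a squared residual at step $i\le j$) because $\EVp[\nabla_\theta\log\pi_\theta(a_j|s_j)\,|\,\mathcal{F}^s_j]=0$, where $\mathcal{F}^s_j$ is the history up to and including $s_j$; the surviving causal part regroups into $\gamma^{j}X_\pi(s_j,a_j)$ under the conditional expectation.

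I expect the main obstacles to be technical rather than conceptual. First, the interchange of $\nabla_\theta$ with $\EVp$ — and, when $\gamma=1$, with the random-length sum — calls for a dominated-convergence argument resting on the boundedness of $\mathcal{R}$, the differentiability of $\pi_\theta$ with an integrable local bound on $\nabla_\theta\log\pi_\theta$, and the integrability of $\eps$ uniformly over policies near $\theta$; this is presumably what the ``technical hypotheses'' in the following remark package. Second, because the number of score terms is itself the random quantity $\eps$, the step annihilating the acausal terms (equivalently, the truncation of the unrolled recursion) must be made rigorous \emph{with} the random horizon in hand: the clean fact to lean on is that $\{\eps>j\}=\{s_j\notin\Tspace\}$ is $\mathcal{F}^s_j$-measurable, so conditioning on the pre-action history at step $j$ is legitimate and the indicator $I_{\eps>j}$ passes through. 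Finally, one should flag that freezing $J_\pi$ is genuinely needed for the clean statement: a fully consistent differentiation would add the correction $-2\,\nabla_\theta J_\pi\,\EVp[\sum_{i=1}^{\eps}\gamma^{i-1}(r_i-J_\pi)]$, which vanishes when $\Gamma$ is deterministic — recovering the original fixed-horizon result — but not in general; this, together with the need to keep the objective in the pure discounted-return shape that \eqref{eq:bellmanX} presupposes, is why the paper works with the unnormalized $\hat{\nu}^2_\pi$ rather than its $\Gamma^{-1}$-weighted counterpart.
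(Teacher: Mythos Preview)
Your Bellman-unrolling argument is essentially the paper's own route: it too differentiates the recursion \eqref{eq:bellmanX}, applies the score identity $\nabla_\theta\pi_\theta=\pi_\theta\,\nabla_\theta\log\pi_\theta$, and iterates until the tail dies; the paper simply packages the unrolling as an explicit induction on a truncation depth $N$, with the remainder killed as $N\to\infty$ by the indicator $I_{\eps\ge N}$ and the almost-sure finiteness of $\eps$. Your remark that $\{\eps>j\}=\{s_j\notin\Tspace\}$ is measurable before the action at step $j$ is exactly how the paper's indicators behave inside the induction.

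Where you and the paper diverge is the handling of $J_\pi$. You freeze it and flag the freezing as essential; the paper does \emph{not}. Its induction carries the extra contribution
\[
-2\,(\nabla_\theta J_\pi)\sum_{i=1}^{N}I_{\eps\ge i}\,\gamma^{i-1}(r_i-J_\pi)
\]
through every level and then argues that in the limit this term vanishes under $\EVp$ ``by definition of $J_\pi$''. So the paper asserts precisely the cancellation you suspect fails when $\Gamma$ is random. With the stated definition $J_\pi=\EVp[\Gamma^{-1}\mathcal{G}]$, that limiting expectation equals $\hat{J}_\pi-J_\pi\,\EVp[\Gamma]=\operatorname{Cov}(\Gamma^{-1}\mathcal{G},\Gamma)$, which is zero for deterministic $\Gamma$ but not obviously so in the stochastic-horizon setting; your ``freeze $J_\pi$'' reading (or, equivalently, a centring constant $\hat{J}_\pi/\EVp[\Gamma]$) is therefore the cleaner way to make \eqref{eq:gradient} exact. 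In any case, the architecture of your proof matches the paper's; only this one point differs, and on it your caveat is sharper than the paper's own argument.
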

\begin{proof}
We prove by induction on $N$ that
\begin{multline}\label{eq:gradient_induction}
	\nabla_\theta \hat{\nu}_\pi^2 = \EVp \left[ \sum_{i=0}^{N-1} I_{\eps > i} \gamma^{i} X_{\pi}(s_i,a_i)\nabla_\theta \log\pi_\theta(a_i|s_i) \right.\\
	\left. - 2\left(\nabla_\theta J_{\pi}\right)  \sum_{i=1}^{N} I_{\eps \geq i} \gamma^{i-1}(r_i - J_\pi) + I_{\eps \geq N} \gamma^N \nabla_\theta \EV_{\substack{s'\sim P(\cdot|s_{N-1},a_{N-1})\\a'\sim\pi(\cdot|s')}} \left[ X_\pi(s',a') \right] \right]. 
\end{multline}
Letting $N\to\infty$ would then give the conclusion because by hypothesis $\eps$ is almost surely finite, hence:
\begin{enumerate}
\item The first addend under $\EVp$ is eventually equal to that in \eqref{eq:gradient};
\item The second addend is eventually equal to $-2\nabla J_{\pi} \sum_{i=1}^{\eps} \gamma^{i-1}(r_i - J_\pi)$ where the sum equals $0$ by definition of $J_\pi$;
\item The third addend is eventually null because of the indicator $I_{\eps \geq N}$.
\end{enumerate}

In the base case $N=0$, and slightly abusing the notation $\mathcal{P}(\cdot|s_{-1},a_{-1})$ to mean the initial distribution $\mu$, the two sums in \eqref{eq:gradient_induction} are empty and we must check
\begin{equation*}
	\nabla_\theta \hat{\nu}_\pi^2 = \EVp \left[\nabla_\theta \EV_{\substack{s'\sim \mu\\a'\sim\pi(\cdot|s')}} \left[ X_\pi(s',a') \right] \right],
\end{equation*}
where the outer expected value is useless because the integrand is deterministic: we must only prove
\begin{equation*}
	\nabla_\theta \hat{\nu}_\pi^2 = \nabla_\theta \EV_{\substack{s'\sim \mu\\a'\sim\pi(\cdot|s')}} \left[ X_\pi(s',a') \right],
\end{equation*}
which is true even without the $\nabla_\theta$ operator, by the definitions of $\nu_\pi$ and $X_\pi$.

In the induction step we assume \eqref{eq:gradient_induction} for some $N$ and prove it for $N+1$. To this purpose, we expand the third addend, and differentiate under expectation and integral signs:
\begin{multline*}
\nabla_\theta \EV_{\substack{s'\sim P(\cdot|s_{N-1},a_{N-1})\\a'\sim\pi(\cdot|s')}} \left[ X_\pi(s',a') \right] = 
\nabla_\theta \EV_{s'\sim P(\cdot|s_{N-1},a_{N-1})} \left[\int_{\Aspace} X_\pi(s',a') \pi_\theta(a'|s')\bar{\pi}(\diff{}a') \right]\\
 =\EV_{s'\sim P(\cdot|s_{N-1},a_{N-1})} \left[\int_{\Aspace} \left( X_\pi(s',a') \frac{\nabla_\theta \pi_\theta(a'|s')}{\pi_\theta(a'|s')}+ \nabla_\theta X_\pi(s',a')\right)\pi_\theta(a'|s')\bar{\pi}(\diff{}a') \right] \\
= \EV_{\substack{s'\sim P(\cdot|s_{N-1},a_{N-1})\\a'\sim\pi(\cdot|s')}}  \left[X_\pi(s',a') \nabla_\theta \log \pi_\theta(a'|s')+ \nabla_\theta X_\pi(s',a') \right].
\end{multline*}
We recall that the gradient we just computed appeared in \eqref{eq:gradient_induction} pre-multiplied by $\gamma^N I_{\eps \geq N}$ in an expected value $\EVp$, and $(s'\sim P(\cdot|s_{N-1},a_{N-1}), a'\sim\pi(\cdot|s'))$ defines exactly the conditional law of $(s_N,a_N)$ given $(s_{N-1},a_{N-1})$: by the tower law we deduce that
\begin{multline*}
\EVp \left[ \gamma^N I_{\eps \geq N} \nabla_\theta \EV_{\substack{s'\sim P(\cdot|s_{N-1},a_{N-1})\\a'\sim\pi(\cdot|s')}} \left[ X_\pi(s',a') \right] \right]\\
= \EVp \left[\gamma^N I_{\eps \geq N} X_\pi(s_N,a_N)  \nabla_\theta \log \pi_\theta(a_N|s_N) \right] + \EVp \left[ \gamma^N I_{\eps \geq N} \nabla_\theta X_\pi(s_N,a_N) \right].
\end{multline*}
Since $X_\pi(s_N,a_N)=0$ on $\{\eps=N\}$, we can substitute the indicators $I_{\eps \geq N}$ with  $I_{\eps > N}$ in the first addend and with the equivalent $I_{\eps \geq N+1}$ in the second. The first expectation above then becomes the $i=N$ term in the first summation of \eqref{eq:gradient_induction}, and we only have to analyse the second expectation.

Now we invoke the Bellman-like recursion \eqref{eq:bellmanX} and get
\begin{multline*}
 \EVp \left[ \gamma^N I_{\eps \geq N+1} \nabla_\theta X_\pi(s_N,a_N) \right] =\\
 \EVp \left[ -2\gamma^N I_{\eps \geq N+1}\int_\mathcal{S}\mathcal{P}(\diff{}s'|s_N,a_N)\int_{\mathbb{R}}\mathcal{R}(\diff{}r|s_N,a_N,s')\left[r-J_\pi\right]\nabla_\theta J_\pi\right] \\
+ \EVp \left[\gamma^N I_{\eps \geq N+1} \nabla_\theta \gamma \EV_{\substack{s'\sim P(\cdot|s_N,a_N)\\a'\sim\pi(\cdot|s')}}\big[X_\pi(s',a')\big] \right]
\end{multline*}
In the double integral on the right hand side, $r$ is sampled exactly according to the law of $r_{N+1}$ given $(s_N,a_N)$, so we are done proving \eqref{eq:gradient_induction} with $N$ replaced by $N+1$.
\end{proof}

\begin{remark}[Technical hypotheses]
The proof of the theorem relies on a passage to the limit $N\to\infty$ within the expectation \eqref{eq:gradient_induction}, and on a couple of exchanges of integrals with the $\nabla_\theta$ operator. This is why the theorem is stated under unspecified technical hypotheses: many elementary results would justify the mentioned manipulations, so we believe that picking one specific set of assumptions for the sake of formalism would add little to our treatment.
\end{remark}

With the above result at our disposal, generalization of the Trust Region Volatility Optimization (TRVO) algorithm of \citet{Bisi2020trvo} is straightforward, and hence omitted: we are eventually in the position of using an efficient risk-averse optimization tool to solve our specific control problem.

\subsection{Hedging as a Reinforcement Learning problem}\label{sec:RL_for_hedging}

By the definition of the gain process in \eqref{eq:gain}, the signed increment $G_t-G_s$ represents the performance over the time period $[s,t)$ of the hedging strategy, usually referred to as profit-and-loss (PnL). Traders aim at maximizing such increment, but also at controlling its variability, not only in distributional sense as the possibility of a large negative PnL over the full trading period $[t_0,\bar{t})$, but also in the time direction: e.g., they cannot accept recording a large loss $G_{\bar{t}/2} \ll 0$, even if it leads eventually to a positive gain $G_{\bar{t}}>0$ with high probability. All of this suggests that a good description of the trader's objective should consider a \emph{set} of increments over a time grid $t_0<t_1<\dots<t_N=\bar{t}$:
\begin{equation}\label{eq:reward_as_PnL}
	R_{t_{i+1}} = G_{t_{i+1}}-G_{t_i},
\end{equation}
to be maximized in volatility-averse sense. With this observation, we are ready to describe the hedging problem as an MDP with the following interpretations:

\begin{description}

\item[State] A set of variables $\vc{S}_{t_i}$ sufficient to determine all the financial processes defined in section~\ref{sec:environment} at a time $t_i$, and the law of their future evolution, plus a description of the portfolio position at $t_i$, and the timestamp $t_i$ itself. In formulas, we put $s_i \coloneqq \vc{S}_{t_i}$.

\item[Action] A set of variables $\vc{A}_{t_i}$ sufficient to represent the new composition of the portfolio in the time frame $(t_i,t_{i+1}]$, where we suppose that $N^{h}$ for $h\in\mathscr{F}$ changes only at the finite set of times $\{t_0,\dots,t_N\}$. In formulas, we put $a_i \coloneqq \vc{A}_{t_i}$.

\item[Dynamics] It encompasses the law of the evolution of the market on $[t_i,t_{i+1})$, and the mechanics of the book values as described in section~\ref{sec:environment}.

\item[Reward] It is defined as the PnL over a single period. In formulas, we put $r_{i+1} \coloneqq R_{t_{i+1}}$.

\item[Termination] It is the earliest between the trading horizon $\bar{t}$ and default time $\tau$ (discretized on the time grid). In formulas, we put $\eps \coloneqq \min\{i \colon t_i \geq \tau \wedge t_N\}$.

\end{description}

\begin{remark}[Role of the pricing model]
The return of an episode
\begin{equation}\label{eq:return}
    \mathcal{G} = \sum_{i=1}^{t_N} \gamma^{i} R_{t_i}.
\end{equation}
collapses telescopically to $G_{\bar{t}}$ if $\gamma=1$: in particular, maximizing \eqref{eq:return} means maximizing the eventual profit without risk aversion; moreover, if the optimization horizon $\bar{t}$ is the maturity of the portfolio, then there is no dependence of the objective on the pricing model. The latter plays a role only when either $\gamma < 1$ (encoding a preference in PnL timing, be it real or a bias introduced for better algorithm convergence), or when the time distribution of the PnL is part of the risk aversion as in TRVO.
\end{remark}

\section{Implementation}\label{sec:impl}

This section describes a set of concrete modelling choices we made to test numerically the above approach. We aimed at the simplest possible setup retaining the essential financial aspects of the hedging problem.

\subsection{Financial setting}\label{sec:financial_impl}
In this subsection we specify the financial environment with the general notation of section~\ref{sec:environment}.

\subsubsection{Assets and currencies}\label{sec:assets_impl}
The hedged CVA is due to a single EURUSD FX forward.
The set of currencies which appear in the book is
\[
    \mathscr{C} = \left\{ \EUR, \USD \right\}
\]
and EUR is considered to be the main evaluation currency, i.e.
\[
	\bar{c} = \EUR.
\]

We suppose that $\mathscr{H}$ includes one or more Credit Default Swaps on the CVA counterparty, which we identify with the notation $\CDS_m$ for $m\in\mathscr{M}$, where in practice the index set $\mathscr{M}$ consists of maturity labels, e.g.~$m=\mathrm{5Y}$ for a 5-years long CDS.\footnote{For long trading horizons, one may want to consider synthetic rolling instrument, to use always the most liquid on-the-run maturities for hedging. In such case, the dividend process should be carefully defined to include the roll costs.} Since the only collateralized assets are the CDS, the set of assets is therefore
\[
    \mathscr{H} = \left\{ \CVA, f(\EUR), f(\USD) \right\} \cup \left\{\CDS_m, a(\CDS_m)\right\}_{m\in\mathscr{M}},
\]
where $a(\CDS_m)$ denotes the collateral account of $\CDS_m$, which we assume to be EUR denominated:
\[
    Y_t^{\CDS_m, c} = Y_t^{a(\CDS_m), c} = 0, \qquad \text{if } c \neq \EUR.
\]
Moreover, since the free variables are those in the set $\mathscr{H}\setminus(\mathscr{A}\cup\{\CVA,\bar{h}\})$, it comes that the free set is actually
\[
	\mathscr{F} = \left\{ f(\USD) \right\} \cup \left\{\CDS_m\right\}_{m\in\mathscr{M}}.
\]
which we call informally ``hedging assets''.

We also assume that interest rates are known time-independent constants for all cash accounts.

\subsubsection{Data generation}\label{sec:simulator}
The risk drivers which we used to generate the dataset are the FX rate $\phi_t:=\phi^{\textrm{USDEUR}}_t$, the risk-neutral default intensity $\lambda_t$, and the counterparty default time $\tau$.

We simulate the FX rate via a Geometric Brownian Motion (GBM)
\begin{equation*}%\label{eq:USDEUR_P_dynamics}
	\frac{\diff{}\phi_t}{\phi_t} =\mu^{\bbP,\phi}\,\diff{}t +\sigma^{\bbP,\phi}\,\diff{}W^{\bbP,\phi}_t,
\end{equation*}
with $\mu^{\bbP,\phi}$ the drift, $\sigma^{\bbP,\phi}$ the volatility, and $W^{\bbP,\phi}_t$ a Wiener process.

Note that a risk neutral approach would imply the drift from the FX forwards market, and a volatility value from the FX option prices. On the other hand, here simulation is not aimed at pricing, but rather at generating training and testing data for our agents, and we need data showing a sufficient variety and richness so to mimic the time evolution of real data, which can significantly diverge from the risk neutral schemes. In this sense, we admit general constants for $\mu^{\bbP,\phi}$ and $\sigma^{\bbP,\phi}$. On the other hand, at a given time $t$, we suppose that the CVA pricing uses a  model $\bbQ_t$ with the same form as \eqref{eq:USDEUR_Q_dynamics}:%, where the drift $\mu^{\bbQ,\phi}$ is forced by risk neutrality:
\begin{equation}\label{eq:USDEUR_Q_dynamics}
	\frac{\diff{}\phi_t}{\phi_t} = \mu^{\bbQ,\phi}\,\diff{}t + \sigma^{\bbQ,\phi}\, \diff{}W^{\bbQ,\phi}_t;
\end{equation}
note that for $\bbQ_t$ to be equivalent to the data generating measure $\bbP$ one would need that $\sigma^{\bbP,\phi} = \sigma^{\bbQ,\phi}$ by the Girsanov theorem, but we do not force this coherence as argued when $\bbQ_t$ was introduced in section~\ref{sec:CVA}.

We simulate the default intensity via the Cox Ingersoll Ross (CIR) model \citep{cox1985theory}
\begin{equation}\label{eq:lambda_P_dynamics}
	\diff{}\lambda_t = k^{\bbP,\lambda}(\theta^{\bbP,\lambda} - \lambda_t)\,\diff{}t + \sigma^{\bbP,\lambda} \sqrt{\lambda_t} \,\diff{}W^{\bbP,\lambda}_t,
\end{equation}
where $k^{\bbP,\lambda}$ is the mean reversion speed, $\theta^{\bbP,\lambda}$ is the long term intensity, $\sigma^{\bbP,\lambda}$ the volatility, and $W^{\bbP,\lambda}_t$ another Wiener process. Again we suppose that the CVA pricing is performed with a risk neutral model with the same form as \eqref{eq:lambda_P_dynamics}:
\begin{equation}\label{eq:lambda_Q_dynamics}
	\diff{}\lambda_t = k^{\bbQ,\lambda}(\theta^{\bbQ,\lambda} - \lambda_t)\,\diff{}t + \sigma^{\bbQ,\lambda} \sqrt{\lambda_t} \,\diff{}W^{\bbQ,\lambda}_t,
\end{equation}
where one can enforce equivalence of $\bbP$ and $\bbQ_t$ by forcing $\sigma^{\bbQ,\lambda}=\sigma^{\bbP,\lambda}$ and expressing the drift parameters of \eqref{eq:lambda_Q_dynamics} in terms of the parameters of \eqref{eq:lambda_P_dynamics} and a market price of risk proportional to either $\lambda_t^{1/2}$ or $\lambda_t^{-1/2}$, but doing so is not mandatory.

An instantaneous correlation $\rho^{\bbP}_{\lambda\phi}$ between the stochastic terms $\diff{}W^{\bbP,\phi}_t$ and $\diff{}W^{\bbP,\lambda}_t$ can be naturally introduced, which captures only mildly the relation between the jump to default and the FX term, but is sufficient to generate a correlation between the dynamics of the FX rate and the credit spread, whose effects on the hedging problem are among the main topics of this paper. We can also introduce a correlation term $\rho^{\bbQ}_{\lambda\phi}$ in the risk neutral pricing model $\bbQ_t$; for $\bbQ_t\sim\bbP$ one would need the two correlations to be equal, but we deem interesting also the case in which the pricing correlation is misspecified, e.g. $\rho^{\bbQ}_{\lambda\phi} = 0 \neq \rho^{\bbP}_{\lambda\phi}$.

Once $\phi_t$ is simulated, the bid and ask values for the FX rate are obtained as
\[
	\phi^{\textrm{USDEUR}, \pm}_t = \phi_t \pm \gamma^{\USD}
\] 
where we suppose that the semi-spread $\gamma_t^{\USD} = \gamma^{\USD}$ is a constant. Real bid-ask spreads are stochastic, but their short term movements are hard to model and generally uncorrelated with ordinary market moves, so we believe that using an average is enough to inform our agent of FX transaction costs. More relevant changes related to market regime switches might be relevant for longer term optimizations, and are left as a possible future development.

Analogously, once $\lambda_t$ is simulated, we generate bid and ask default intensities $\lambda^{\pm}$ by applying a semi-spread $\gamma^{\lambda}$:
\[
	\lambda^{\pm}_t = \lambda_t \pm \gamma^{\lambda},
\]
where again we believe that a constant $\gamma^{\lambda}$ with the correct order of magnitude is enough for our purposes. To fix such constant, we rely on the relation $\bar{\lambda} = s / (1-\Rec)$, valid for a deterministic flat default intensity $\bar{\lambda}$ where $s$ is the par spread of an idealized Credit Default Swap with recovery $\Rec$: inspired by that, we put
\[
	\gamma^{\lambda} = \frac{\gamma^{s}}{1-\Rec}
\]
with $\gamma^{s}$ an average bid-ask semi-spread for hedging CDS quotes, and $\Rec$ their conventional recovery defined in section~\ref{sec:CDS}. Eventually, standard pricing formulas for model \eqref{eq:lambda_Q_dynamics} \citep[e.g.][]{brigurio} map the simulated $\lambda^{\pm}_t$ to bid and ask prices $X_t^{\CDS_m,\pm}$; the mid price $X_t^{\CDS_m}$ and infinitesimal transaction cost $\gamma_t^{\CDS_m}$ are then defined by \eqref{eq:bidask}. 

The map from $(\phi_t, \lambda_t)$ to $X_t^{\CVA}$ is implemented as follows:
\begin{enumerate}
	\item If $\rho^{\bbQ}_{\lambda\phi} = 0$, by equations \eqref{eq:cva_indep}-\eqref{eq:cva_determ_rates}, where the expected value in \eqref{eq:cva_determ_rates} is given by a standard Black formula;
	\item If $\rho^{\bbQ}_{\lambda\phi} \neq 0$, by noting that \eqref{eq:cva_cox} implies via the Feynman-Kac theorem that $\CVA_t = \psi(t,\phi_t,\lambda_t)$ for a function $\psi$ satisfying the Partial Differential Equation 
	\begin{multline*}
		(1-\Rec)\lambda\max\left(0, \phi P_t^{c_1}(T)N^{c_1} -  P_t^{\bar{c}}(T)N^{\bar{c}}\right) = \biggl(\partial_t + \mu^{\bbQ,\phi}\phi\partial_\phi + k^{\bbQ,\lambda}(\theta^{\bbQ,\lambda} - \lambda)\partial_\lambda \\
		+ \frac12\left(\sigma^{\bbQ,\phi}\right)^2\phi^2\partial_{\phi\phi} + \frac12\left(\sigma^{\bbQ,\lambda}\right)^2\lambda\partial_{\lambda\lambda} + \rho^{\bbQ}_{\lambda\phi}\sigma^{\bbQ,\phi}\sigma^{\bbQ,\lambda}\phi\sqrt{\lambda}\partial_{\phi\lambda} - \left(\bar{c}+\lambda\right)\biggr)\psi
	\end{multline*}
	%- (1-\Rec)\lambda\max\left(0, \phi P_t^{c_1}N^{c_1} -  P_t^{\bar{c}}N^{\bar{c}}\right) = 0
	with terminal condition $\psi(T,\cdot,\cdot)=0$, which can be solved numerically.
\end{enumerate}

It remains to simulate default times. Let us suppose for simplicity that the real-world instantaneous Probability of Default (PD) $\bar{\lambda}_t$ is a deterministic multiple $\bar{m}(t)$ of the risk neutral intensity $\lambda_t$:
\begin{equation}\label{eq:intensity_mult}
	\bar{\lambda}_t = \bar{m}(t) \lambda_t,
\end{equation}
where $\bar{m}(t)$ can be used to fit e.g.~rating implied term PDs. Then the simplest implementation is to draw on each episode a standard exponential variable $\eps$ independent of all the Brownian drivers, and define
\[
	\tau = \bar{\Lambda}^{-1}(\eps) \qquad\text{where}\qquad \bar{\Lambda}(t) := \int_{t_0}^t \bar{\lambda}_s\,\diff{s}.
\]
However, unless the counterparty is very risky, this approach generates a random and often low number of paths where default is within the optimization horizon; so the agent has very scattered and noisy evidence to learn what happens in these scenarios, which however behave quite differently than survival ones, and can have a large impact on the PnL and its volatility.

Hence we propose importance sampling to compute the expected values under a law $\bbP'$ such that the default indicator is easily drawn from a distribution of choice before the drivers and default time. Specifically, we define $\bbP'$ to keep fixed both the law of the factors $(\phi,\lambda)$, and the law of $\tau$ given $I_{\{\tau \leq \bar{t}\}}$, while we change the conditional law of $I_{\{\tau \leq \bar{t}\}}$ given the risk factors
\[
	\bbP\left(\tau > \bar{t} \mid \phi,\lambda\right) := p(\phi,\lambda) = \exp\left(-\bar{\Lambda}(\bar{t})\right)
\]
to an independent Bernoulli draw
\[
	\bbP'\left(\tau > \bar{t} \mid \phi,\lambda\right) := p'.
\]
Now by a measure change, the expected value of an arbitrary $f(\phi,\lambda,\tau)$ can be rewritten as
\begin{align*}
	\bbE^\bbP\left[f(\phi,\lambda,\tau)\right] &= \bbE^{\bbP'}\left[f(\phi,\lambda,\tau)(p')^{-1}p(\phi,\lambda)I_{\{\tau>\bar{t}\}} + f(\phi,\lambda,\tau)(1-p')^{-1}(1-p(\phi,\lambda))I_{\{\tau\leq\bar{t}\}} \right]\\
	&= \bbE^{\bbP'(\cdot|\tau>\bar{t})}\left[f(\phi,\lambda,\tau)p(\phi,\lambda)\right] + \bbE^{\bbP'(\cdot|\tau\leq\bar{t})}\left[f(\phi,\lambda,\tau)(1-p(\phi,\lambda))\right]
\end{align*}
and each of the two laws appearing there is easy to simulate from, as the diffusion $(\phi,\lambda)$ is independent of the default indicator under $\bbP'$, while the conditional cumulative distribution functions of $\tau$ (equal by construction under $\bbP$ and $\bbP'$) are explicit:
\begin{equation*}%\label{eq:importance_sampling}
	\begin{cases}
		\bbP'\left(\tau > t \mid \tau > \bar{t},\lambda,\phi\right) = \exp\left(\bar{\Lambda}(\bar{t})-\bar{\Lambda}(t)\right),& t > \bar{t},\\
		\bbP'\left(\tau \leq t \mid \tau \leq \bar{t},\lambda,\phi\right) = \left(1-e^{-\bar{\Lambda}(\bar{t})}\right)^{-1}\left(1-e^{-\bar{\Lambda}(t)}\right),& t \leq \bar{t}.
	\end{cases}
\end{equation*}
Therefore, one can eventually estimate each of the two expected values separately by Monte Carlo on $B-B_0$ and $B_0$ paths respectively, where $B$ is the batch size and $B_0$ is a free hyperparameter.

\begin{remark}[Adaptive importance sampling]\label{rem:adaptive_importance_sampling}
Similar importance sampling ideas to handle rare events in RL have already been used by \citet{frank2008rare}; however, our scheme exploits the specifics of our problem in a few respects. Firstly, it does not change single step transitions: it only tweaks the law of a wisely chosen binary variable (the default indicator). Secondly, the proposal density is fixed, instead of being computed adaptively from estimates of the value function: this implies that adoption in the implementation of any RL algorithm is simpler. Finally, we are able to integrate out the law of $\{\tau \leq \bar{t}\}$, so that we can ensure that all batches have \emph{exactly} the same number of defaults: this should make learning more stable.
\end{remark}

\subsection{Reinforcement Learning setting}\label{sec:RL_impl}

In this subsection we specify the Markov Decision Process with the general notation of section~\ref{sec:RL_for_hedging}.

\subsubsection{State}\label{sec:state}
The state vector $\vc{S}_{t_i}$ includes the following real-valued quantities:
\begin{enumerate}
\item\label{it:time} Time $t$ to CVA maturity, in days.
\item\label{it:drivers} Value of the risk drivers $\lambda_t$ and $\phi_t$.
\item\label{it:allocation} Current allocation in the hedging assets, expressed without loss of generality by the first order sensitivity of the hedging book to $\lambda_t$ and $\phi_t$.
\item\label{it:CVA_sensy} Value of CVA, and its first order sensitivities with respect to $\lambda_t$ and $\phi_t$.
\item\label{it:CDS_sensy} Values of all hedging CDS, and their first order sensitivities with respect to $\lambda_t$.
\end{enumerate}

\begin{remark}[Sensitivities]\label{rem:sensy}
Items \ref{it:CVA_sensy} and \ref{it:CDS_sensy} are functions of $(t, \lambda_t, \phi_t)$ which the algorithm may learn by itself, but every modern front office system already calculates first order risks of all book items, so there is no reason not to give them to the RL agent as useful pre-engineered features. A similar consideration applies to item \ref{it:allocation}: one may naively put into the state the notionals $N^{\CDS}$ and $N^{f(\USD)}$, but sensitivities are universally considered by practitioners as a better representation of risk when trading, so we let the AI trader start from information in this form.
\end{remark}

\subsubsection{Action}\label{sec:action}

As we said in section~\ref{sec:RL_for_hedging}, the action vector $\vc{A}_{t_i}$ should describe the allocations in the interval $(t_i, t_{i+1}]$. For the same reasons as in remark~\ref{rem:sensy}, we express them as a vector of two sensitivities to $\lambda_t$ and $\phi_t$.

\subsubsection{Reward}\label{sec:reward}
By combining the definition of reward \eqref{eq:reward_as_PnL} and that of gain \eqref{eq:gain}, we get
\begin{multline}\label{eq:reward}
    R_{t_{i+1}} \approx \sum_{h\in\mathscr{H}} N_{t_{i+1}}^h (\bar{X}_{t_{i+1}}^h - \bar{X}_{t_i}^h + \bar{Y}_{t_{i+1}}^h - \bar{Y}_{t_i}^h) - \sum_{m\in\mathscr{M}} \gamma_{t_i}^{\CDS_m} \left|N_{t_{i+1}}^{\CDS_m} - N_{t_i}^{\CDS_m}\right| + \\
    - \gamma_{t_i}^{\USD}\left[\left|r^{\USD}N_{t_{i+1}}^{f(\USD)}\right|(t_{i+1}-t_{i}) + \left|N_{t_{i+1}}^{f(\USD)} - N_{t_i}^{f(\USD)}\right| \right],
\end{multline}
where the first summation contributes to the reward with the effects of price variations and dividends, while the remaining addends correspond to the costs arising from portfolio rebalances.

\begin{remark}[Discrete increments]
In computing $R_{t_{i+1}}$, the assumption that $N^{h}$ is piecewise constant for $h\in\mathscr{F}$ would allow to replace integrals with full-step finite increments only when the integrand is $N^{\CDS_m}$ or $N^{f(\USD)}$, but we did it for simplicity also for the other integrals: this is a quite accurate approximation because $t_{i+1}-t_{i}$ is small, and in any case, any desired precision can be achieved by a straightforward generalized implementation using finer Riemann sums. On the other hand, if interest rates are null then \emph{all} integrands are piecewise constant up to default time, and equation \eqref{eq:reward} is exact except for the tiny effect of discretizing $\tau$ on the simulation grid.
\end{remark}

\section{Numerical results}\label{sec:numerics}

We collect here empirical evidence on the behaviour of the algorithm, and of the optimized policy with different choices of model parameters and algorithmic hyperparameters.

\subsection{Common parameters}\label{sec:parameters}

\begin{table}
	\centering
	\begin{tabular}{llr}
		\toprule
		Description & Symbol & Value\\
		\colrule
		Time to maturity of the FX forward & $T-t_0$ & 5.0\\
		Dollars exchanged at maturity & $N^{\USD}$ & 1.1\\
		Euros exchanged at maturity & $N^{\EUR}$ & 1.0\\
		Recovery fraction of CVA and of the CDS & $\Rec$ & 40\%\\
		Value of EUR interest rates & $r^{f(\EUR)}=r^{a(\CDS_m)}$ & 3.3\%\\
		Value of USD interest rates & $r^{f(\USD)}$ & 4.5\%\\
		Mid exchange rate USDEUR at the pricing date & $\phi_{t_0}$ & 1.0\\
		USDEUR trading cost parameter & $\gamma^{\USD}$ & $5\times10^{-5}$\\
		Risk-neutral drift of the USDEUR FX rate & $\mu^{\bbQ,\phi}$ & -1.2\%\\
		Volatility of the USDEUR FX rate & $\sigma^{\bbP,\phi}=\sigma^{\bbQ,\phi}$ & 10\%\\
		Risk-neutral correlation of Brownian drivers & $\rho^{\bbQ}_{\lambda\phi}$ & 0\%\\
		\botrule
	\end{tabular}
	\caption{Base value of termsheet and model parameters. Time unit is years.}\label{tab:model_pars}
\end{table}

The financial parameters in table \ref{tab:model_pars} are kept fixed throughout all experiments. The time grid $t_0<t_1<\dots<t_N$ spans 90 trading days and considers a 2-hours spacing within each trading day, for a total of 5 timesteps per day; note that this realistically implies a non-uniform spacing in calendar time, with larger steps (and market movements) across the nights, and even larger no-action gaps due to weekends. The actor and critic in the TRVO algorithm are represented by a neural network with two hidden layers of 10 units each and hyperbolic tangent activation function; and trained with batches of 500 episodes.

Performance metrics are computed with $\gamma=1$ even though $\gamma=0.95$ is used in training to ease convergence.

\subsection{Runs with negligible defaults}\label{sec:run_no_defaults}

\begin{table}
	\centering
	\begin{tabular}{llr}
		\toprule
		Description & Symbol & Value\\
		\colrule
		Instantaneous default intensity at the pricing date & $\lambda_{t_0}$ & 1.66\%\\
		Mean reversion speed of the default intensity & $k^{\bbP,\lambda}=k^{\bbQ,\lambda}$ & 0.3769\\
		Long term mean of the default intensity & $\theta^{\bbP,\lambda}=\theta^{\bbQ,\lambda}$ & 1.87\%\\
		CIR volatility coefficient of the default intensity & $\sigma^{\bbP,\lambda}=\sigma^{\bbQ,\lambda}$ &  19.22\%\\
		CDS trading cost parameter & $\gamma^{\lambda}$ & $8.3\times10^{-4}$\\
		Real-world correlation of Brownian drivers & $\rho^{\bbP}_{\lambda\phi}$ & 50\%\\
		Real-world drift of the USDEUR FX rate & $\mu^{\bbP,\phi}$ & 0\%\\
		\botrule
	\end{tabular}
	\caption{Value of termsheet and model parameters for section~\ref{sec:run_no_defaults}. Credit parameters represent a counterparty with an initial flat CDS spread term structure of 100 bps with a bid-ask semi-spread of about 5 bps. Time unit is years.}\label{tab:negligible_defaults_model_pars}
\end{table}

\begin{figure}
    \centering
    \begin{tikzpicture}

    \begin{axis}
        [
        	table/col sep=semicolon,
            height=7cm, width=\linewidth,
            %ymin=0, %ymax=1,
            title={Return-volatility comparison},
            xmajorgrids=true,
            ymajorgrids=true,
            % x-axis
            xlabel={$\hat{\nu}^2_\pi$},
            xmin=0, xmax=0.00000018,
            x tick label style={
                /pgf/number format/fixed,
                /pgf/number format/fixed zerofill,
                /pgf/number format/precision=1,
            },
            % y-axis
            ylabel={$\hat{J}_\pi$},
            ymin=-0.0012, ymax=-0.0003,
            y tick label style={
                /pgf/number format/fixed,
                /pgf/number format/fixed zerofill,
                /pgf/number format/precision=1,
            %    rotate=90
            },
        ]
        \addplot[
            AgentStyle,
            skip coords between index={0}{1},
            visualization depends on={value \thisrow{anchor}\as\myanchor},
        ] table[x=x, y=y, meta=meta] {./nodefault_100bps_performance.csv};
        \addplot+[
            BaselineStyle,
            skip coords between index={1}{9},
            visualization depends on={value \thisrow{anchor}\as\myanchor},
        ] table[x=x, y=y, meta=meta] {./nodefault_100bps_performance.csv};
        \draw [dashed] (1.25510910360427E-08, -0.00106650290974211) -- (1.25510910360427E-08, -0.000451118588840289);
    \end{axis}
    
\end{tikzpicture}
    \caption{Each dot represents the average performance, depending on $\beta$ annotated next to each dot, of an agent over 2,000 out-of-sample episodes in terms of return and unnormalized reward volatility, for a counterparty whose $\bbP$ probability of default is infinitesimal.  Model parameters are specified in table~\ref{tab:negligible_defaults_model_pars}, and CVA at inception equals $-3.34\times10^{-3}$ EUR.}
    \label{fig:no_defaults_pareto}
\end{figure}
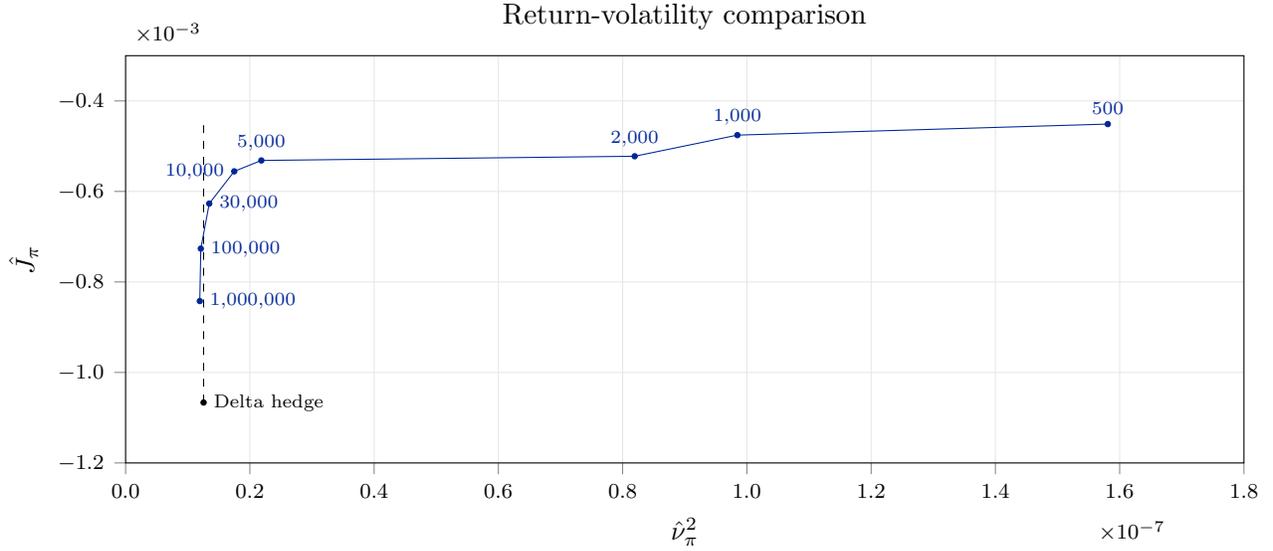
\begin{figure}
    \centering
       \begin{tikzpicture}

    \begin{axis}
        [
            height=6cm, width=\linewidth,
            xmajorgrids=true,
            xmin=0, xmax=450,
            xlabel={Timestep},
            ylabel={CDS Action},
			yticklabel style={
		        /pgf/number format/fixed,
		        /pgf/number format/fixed zerofill,
        		/pgf/number format/precision=3
			},
			scaled y ticks=false,
            % Legend
            legend style={
                legend columns=5,
                at={(0.5, -0.25)},
                anchor=north
%				legend pos=north east,
            }
        ]
        \addplot[LineStyle, black] table[y=delta_crd] {./nodefault_100bps_ep1-actions.csv}; \addlegendentry{Delta hedge}
        \addplot[LineStyle, dashdotted, txtOrange] table[y=5e2_crd] {./nodefault_100bps_ep1-actions.csv}; \addlegendentry{$\beta=500$}
        \addplot[LineStyle, dashdotted, txtRed3] table[y=1e3_crd] {./nodefault_100bps_ep1-actions.csv}; \addlegendentry{$\beta=\text{1,000}$}
        \addplot[LineStyle, dashdotted, txtRed] table[y=2e3_crd] {./nodefault_100bps_ep1-actions.csv}; \addlegendentry{$\beta=\text{2,000}$}
        \addplot[LineStyle, dashdotted, txtGreen] table[y=5e3_crd] {./nodefault_100bps_ep1-actions.csv}; \addlegendentry{$\beta=\text{5,000}$}
        \addplot[LineStyle, dashdotted, txtGreen3] table[y=1e4_crd] {./nodefault_100bps_ep1-actions.csv}; \addlegendentry{$\beta=\text{10,000}$}
        \addplot[LineStyle, dashdotted, txtBlue3] table[y=3e4_crd] {./nodefault_100bps_ep1-actions.csv}; \addlegendentry{$\beta=\text{30,000}$}
        \addplot[LineStyle, dashdotted, txtBlue] table[y=1e5_crd] {./nodefault_100bps_ep1-actions.csv}; \addlegendentry{$\beta=\text{100,000}$}
        \addplot[LineStyle, dashdotted, txtPurple] table[y=1e6_crd] {./nodefault_100bps_ep1-actions.csv}; \addlegendentry{$\beta=\text{1,000,000}$}
    \end{axis}

    \begin{axis}
        [
            height=6cm, width=\linewidth,
            yshift=4.5cm,
            title={Evolution of the action over time},
            xmajorgrids=true,
            xtick style={draw=none},
            xticklabels={,,},
            xmin=0, xmax=450,
            ylabel={FX Action},
			yticklabel style={
		        /pgf/number format/fixed,
		        /pgf/number format/fixed zerofill,
        		/pgf/number format/precision=3
			},
			scaled y ticks=false,
        ]
        \addplot[LineStyle, black] table[y=delta_fx] {./nodefault_100bps_ep1-actions.csv};
        \addplot[LineStyle, dashdotted, txtOrange] table[y=5e2_fx] {./nodefault_100bps_ep1-actions.csv};
        \addplot[LineStyle, dashdotted, txtRed3] table[y=1e3_fx] {./nodefault_100bps_ep1-actions.csv};
        \addplot[LineStyle, dashdotted, txtRed] table[y=2e3_fx] {./nodefault_100bps_ep1-actions.csv};
        \addplot[LineStyle, dashdotted, txtGreen] table[y=5e3_fx] {./nodefault_100bps_ep1-actions.csv};
        \addplot[LineStyle, dashdotted, txtGreen3] table[y=1e4_fx] {./nodefault_100bps_ep1-actions.csv};
        \addplot[LineStyle, dashdotted, txtBlue3] table[y=3e4_fx] {./nodefault_100bps_ep1-actions.csv};
        \addplot[LineStyle, dashdotted, txtBlue] table[y=1e5_fx] {./nodefault_100bps_ep1-actions.csv};
        \addplot[LineStyle, dashdotted, txtPurple] table[y=1e6_fx] {./nodefault_100bps_ep1-actions.csv};
    \end{axis}

\end{tikzpicture}
    \caption{The plot represents the FX action (top) and CDS action (bottom) for an out-of-sample episode, expressed as the sensitivity to respectively $\smash{\phi_t}$ and $\smash{\lambda_t}$ of the hedging portfolio, chosen by the delta hedging strategy and by agents trained at different values of the risk aversion coefficient $\beta$, for a counterparty whose $\bbP$ probability of default is infinitesimal. Model parameters are specified in table~\ref{tab:negligible_defaults_model_pars}.}
    \label{fig:negligible_defaults_actions}
\end{figure}
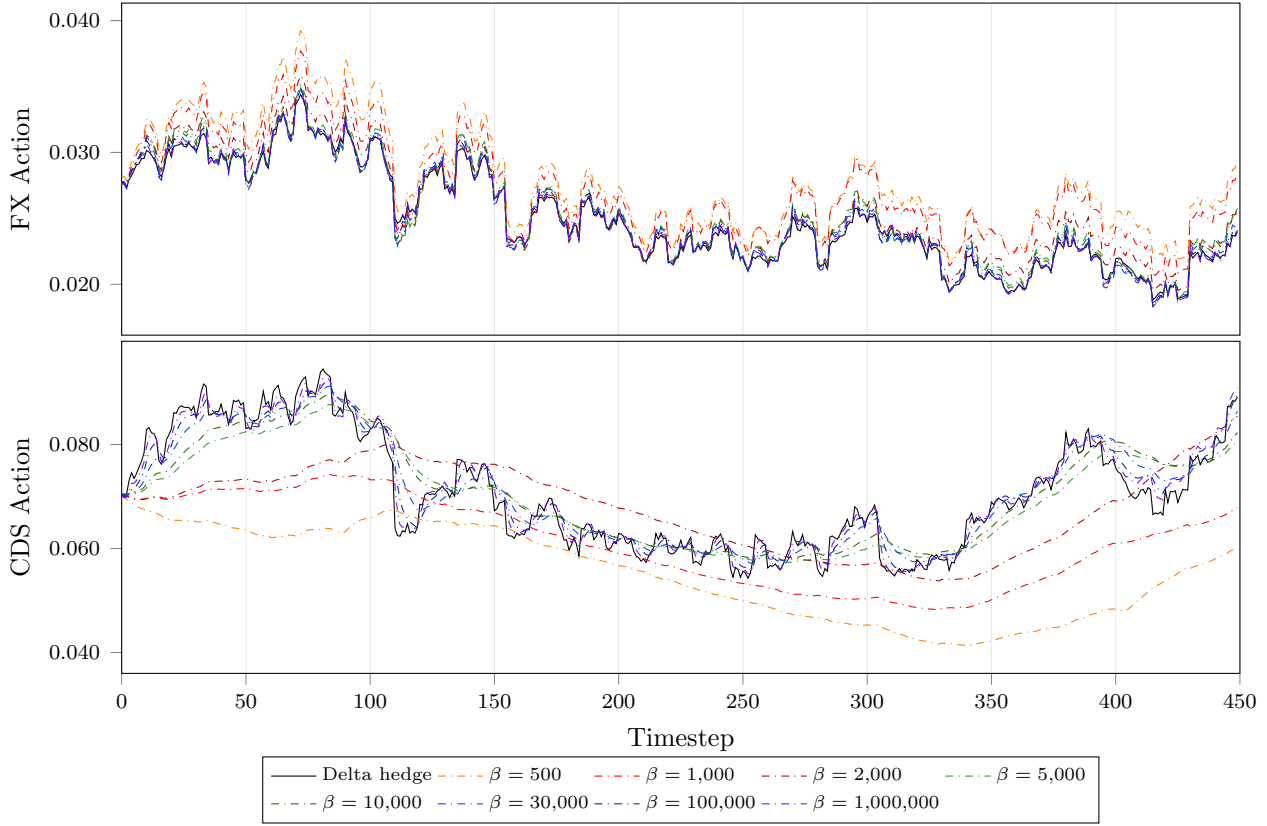
\begin{figure}
    \centering
    \begin{tikzpicture}
   \pgfplotsset{
        ScatterStyle/.style={
            mark options={solid, scale=0.5},
            only marks,
            mark=*
        }
    }

    \begin{axis}
        [
            height=7cm, width=\linewidth,
            title={Agent vs delta hedging actions},
            xmajorgrids=true,
            ymajorgrids=true,
            % x-axis
            xlabel={FX Action minus FX Delta},
            x tick label style={
                /pgf/number format/fixed,
                /pgf/number format/fixed zerofill,
                /pgf/number format/precision=1,
            },
            % y-axis
            ylabel={CDS Action minus CDS Delta},
            scaled y ticks=false,
            y tick label style={
                /pgf/number format/fixed,
                /pgf/number format/fixed zerofill,
                /pgf/number format/precision=2
            },
            % legend
            legend style={
                legend columns=1,
                legend pos=north east,
            }
        ]        
        \addplot[ScatterStyle, txtOrange] table[x=5e2_fx, y=5e2_crd] {./nodefault_100bps_ep1-scatter.csv}; \addlegendentry{$\beta=500$}
        \addplot[ScatterStyle, txtRed3] table[x=1e3_fx, y=1e3_crd] {./nodefault_100bps_ep1-scatter.csv}; \addlegendentry{$\beta=\text{1,000}$}
        \addplot[ScatterStyle, txtRed] table[x=2e3_fx, y=2e3_crd] {./nodefault_100bps_ep1-scatter.csv}; \addlegendentry{$\beta=\text{2,000}$}
        \addplot[ScatterStyle, txtGreen] table[x=5e3_fx, y=5e3_crd] {./nodefault_100bps_ep1-scatter.csv}; \addlegendentry{$\beta=\text{5,000}$}
        \addplot[ScatterStyle, txtGreen3] table[x=1e4_fx, y=1e4_crd] {./nodefault_100bps_ep1-scatter.csv}; \addlegendentry{$\beta=\text{10,000}$}
        \addplot[ScatterStyle, txtBlue3] table[x=3e4_fx, y=3e4_crd] {./nodefault_100bps_ep1-scatter.csv}; \addlegendentry{$\beta=\text{30,000}$}
        \addplot[ScatterStyle, txtBlue] table[x=1e5_fx, y=1e5_crd] {./nodefault_100bps_ep1-scatter.csv}; \addlegendentry{$\beta=\text{100,000}$}
        \addplot[ScatterStyle, txtPurple] table[x=1e6_fx, y=1e6_crd] {./nodefault_100bps_ep1-scatter.csv}; \addlegendentry{$\beta=\text{1,000,000}$}

        \draw [dashed, gray] (0, -0.05) -- (0, 0.03);
        \draw [dashed, gray] (-0.003, 0) -- (0.006, 0);
    \end{axis}
    
\end{tikzpicture}
    \caption{Distribution on an out-of-sample path of the difference, depending on $\beta$, between the agent and the delta hedge in terms of FX action and CDS action, expressed respectively as the sensitivity to $\smash{\phi_t}$ and $\smash{\lambda_t}$ of the hedging portfolio, for a counterparty whose $\bbP$ probability of default is infinitesimal. Model parameters are specified in table~\ref{tab:negligible_defaults_model_pars}.}
    \label{fig:no_defaults_scatter}
\end{figure}
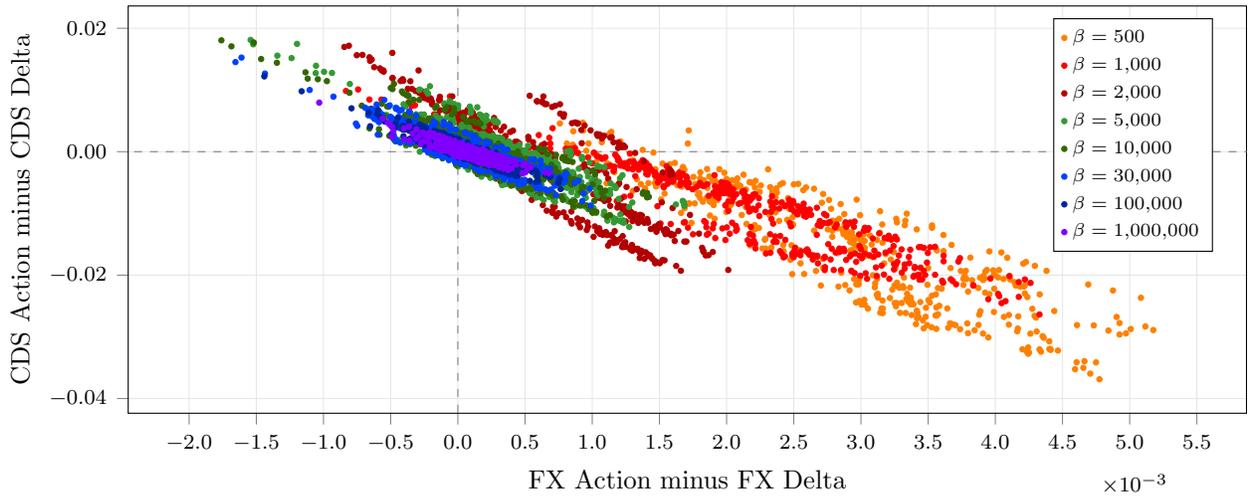
\begin{figure}
    \centering
       \begin{tikzpicture}

    \begin{axis}
        [
            height=6cm, width=\linewidth,
            xmajorgrids=true,
            ymajorgrids=true,
            xlabel={CDS Action minus CDS Delta},
            ymin=0,
            ylabel={Density},
            x tick label style={
                /pgf/number format/fixed,
                /pgf/number format/fixed zerofill,
                /pgf/number format/precision=1,
            },
            %yticklabel style={/pgf/number format/fixed zerofill},
            %scaled y ticks=false,
            % Legend
            legend style={
                legend columns=4,
                at={(0.5, -0.25)},
                anchor=north,
%				legend pos=north east,
            }
        ]
        \addplot[LineStyle, txtOrange] table[x=5e2_x_crd, y=5e2_y_crd] {./nodefault_100bps_ep1-kde.csv}; \addlegendentry{$\beta=500$}
        \addplot[LineStyle, txtRed3] table[x=1e3_x_crd, y=1e3_y_crd] {./nodefault_100bps_ep1-kde.csv}; \addlegendentry{$\beta=\text{1,000}$}
        \addplot[LineStyle, txtRed] table[x=2e3_x_crd, y=2e3_y_crd] {./nodefault_100bps_ep1-kde.csv}; \addlegendentry{$\beta=\text{2,000}$}
        \addplot[LineStyle, txtGreen] table[x=5e3_x_crd, y=5e3_y_crd] {./nodefault_100bps_ep1-kde.csv}; \addlegendentry{$\beta=\text{5,000}$}
        \addplot[LineStyle, txtGreen3] table[x=1e4_x_crd, y=1e4_y_crd] {./nodefault_100bps_ep1-kde.csv}; \addlegendentry{$\beta=\text{10,000}$}
        \addplot[LineStyle, txtBlue3] table[x=3e4_x_crd, y=3e4_y_crd] {./nodefault_100bps_ep1-kde.csv}; \addlegendentry{$\beta=\text{30,000}$}
        \addplot[LineStyle, txtBlue] table[x=1e5_x_crd, y=1e5_y_crd] {./nodefault_100bps_ep1-kde.csv}; \addlegendentry{$\beta=\text{100,000}$}
        \addplot[LineStyle, txtPurple] table[x=1e6_x_crd, y=1e6_y_crd] {./nodefault_100bps_ep1-kde.csv}; \addlegendentry{$\beta=\text{1,000,000}$}
        
		\draw [dashed, txtOrange] (-0.015363616621206774, 0) -- (-0.015363616621206774, 30.518661990289882);
		\draw [dashed, txtRed3] (-0.00944055391552075, 0) -- (-0.00944055391552075, 44.14389702656063);
		\draw [dashed, txtRed] (-0.0026125177699170866, 0) -- (-0.0026125177699170866, 32.116171341378646);
		\draw [dashed, txtGreen] (-0.0008899532154558185, 0) -- (-0.0008899532154558185, 72.79834961745503);
		\draw [dashed, txtGreen3] (-4.375847766276276e-05, 0) -- (-4.375847766276276e-05, 87.5192508417951);
		\draw [dashed, txtBlue3] (0.00026554970275927814, 0) -- (0.00026554970275927814, 114.15864037172533);
		\draw [dashed, txtBlue] (0.00029315710237581805, 0) -- (0.00029315710237581805, 181.39858176408427);
		\draw [dashed, txtPurple] (0.0003133113938889929, 0) -- (0.0003133113938889929, 313.01946012393745);
		
		%\draw [dashed, gray] (0, 0) -- (0, 400);
    \end{axis}

    \begin{axis}
        [
            height=6cm, width=\linewidth,
            yshift=5.7cm,
            xmajorgrids=true,
            ymajorgrids=true,
            x tick label style={
                /pgf/number format/fixed,
                /pgf/number format/fixed zerofill,
                /pgf/number format/precision=0,
            },
            xlabel={FX Action minus FX Delta},
            ymin=0,
            ylabel={Density},
            scaled y ticks=false,
        ]
        \addplot[LineStyle, txtOrange] table[x=5e2_x_fx, y=5e2_y_fx] {./nodefault_100bps_ep1-kde.csv};
        \addplot[LineStyle, txtRed3] table[x=1e3_x_fx, y=1e3_y_fx] {./nodefault_100bps_ep1-kde.csv};
        \addplot[LineStyle, txtRed] table[x=2e3_x_fx, y=2e3_y_fx] {./nodefault_100bps_ep1-kde.csv};
        \addplot[LineStyle, txtGreen] table[x=5e3_x_fx, y=5e3_y_fx] {./nodefault_100bps_ep1-kde.csv};
        \addplot[LineStyle, txtGreen3] table[x=1e4_x_fx, y=1e4_y_fx] {./nodefault_100bps_ep1-kde.csv};
        \addplot[LineStyle, txtBlue3] table[x=3e4_x_fx, y=3e4_y_fx] {./nodefault_100bps_ep1-kde.csv};
        \addplot[LineStyle, txtBlue] table[x=1e5_x_fx, y=1e5_y_fx] {./nodefault_100bps_ep1-kde.csv};
        \addplot[LineStyle, txtPurple] table[x=1e6_x_fx, y=1e6_y_fx] {./nodefault_100bps_ep1-kde.csv};

		\draw [dashed, txtOrange] (0.0028762194365079914, 0) -- (0.0028762194365079914, 355.9922128219893);
		\draw [dashed, txtRed3] (0.00213784573531727, 0) -- (0.00213784573531727, 402.2504053005875);
		\draw [dashed, txtRed] (0.0006919359795501536, 0) -- (0.0006919359795501536, 498.04526055510576);
		\draw [dashed, txtGreen] (0.0004274919057412747, 0) -- (0.0004274919057412747, 868.0615554165664);
		\draw [dashed, txtGreen3] (0.00025271234483964056, 0) -- (0.00025271234483964056, 910.288541475624);
		\draw [dashed, txtBlue3] (0.0001137316560776538, 0) -- (0.0001137316560776538, 934.1068341172545);
		\draw [dashed, txtBlue] (7.482335963429473e-05, 0) -- (7.482335963429473e-05, 1494.6640028005907);
		\draw [dashed, txtPurple] (5.829361289526948e-05, 0) -- (5.829361289526948e-05, 2077.930604549693);
		
		%\draw [dashed, gray] (0, 0) -- (0, 2500);
	\end{axis}

\end{tikzpicture}
    \caption{Kernel density estimators on an out-sample-path of the distribution of the agent's additive corrections with respect to the delta hedging benchmark, for different values of the risk aversion parameter $\beta$, for a counterparty whose $\bbP$ probability of default is infinitesimal. Model parameters are specified in table~\ref{tab:negligible_defaults_model_pars}.}
    \label{fig:no_defaults_kde}
\end{figure}
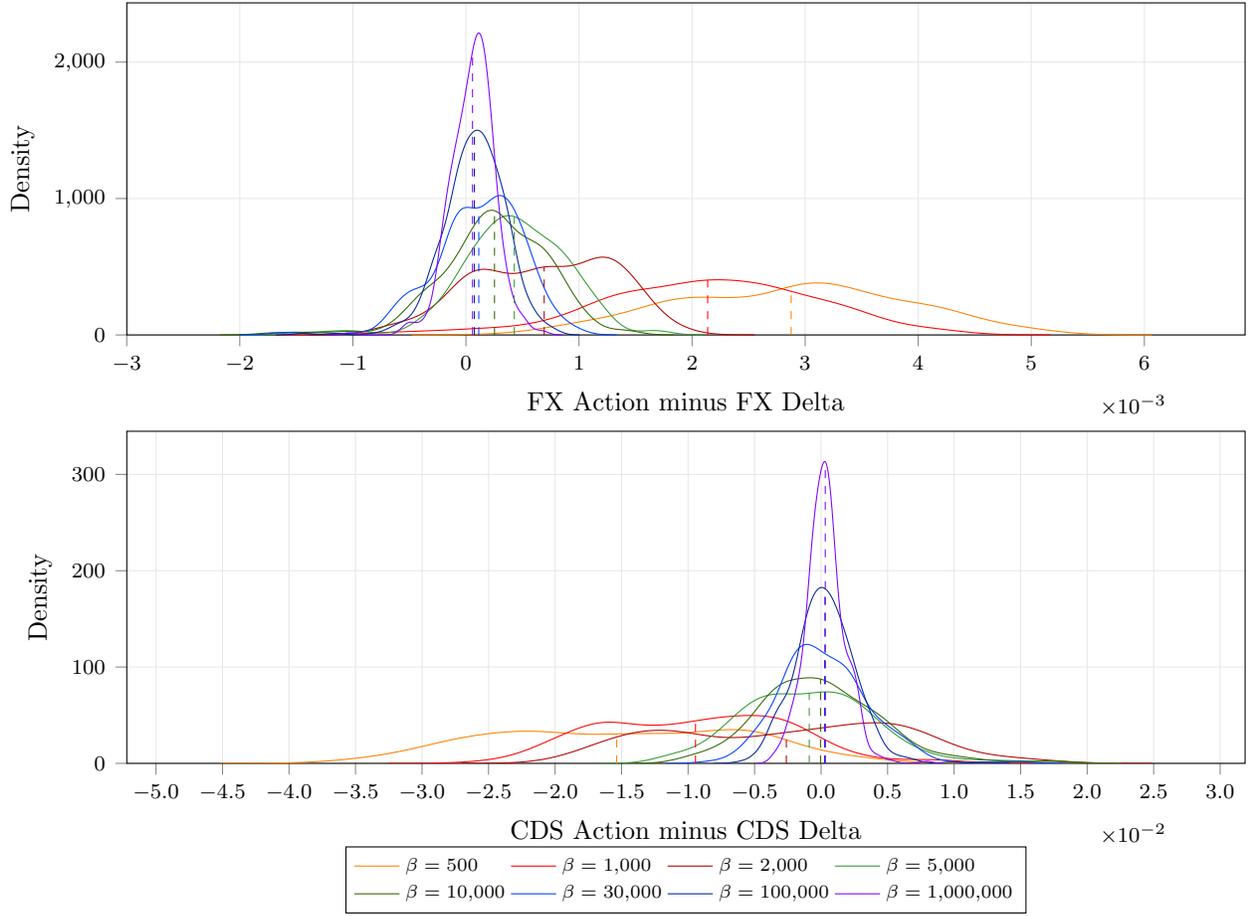

In this subsection, we reproduce a common practical situation in that:
\begin{enumerate}
\item\label{it:no_default} The multiplier defined in equation \eqref{eq:intensity_mult} satisfies $\bar{m}\ll 1$, so that the real-world probability of a default event is assumed negligible.
\item\label{it:different_rho} $\rho^{\bbP}_{\lambda\phi} \neq \rho^{\bbQ}_{\lambda\phi}$, so that CVA pricing is ignoring an existing correlation; precisely, $\rho^{\bbP}_{\lambda\phi}=50\%$.
%\item\label{it:different_vol} $\sigma^{\bbP,\phi} < \sigma^{\bbQ,\phi}$, so that implied volatilities are higher than historical ones; precisely, $\sigma^{\bbP,\phi}=8\%$.
\item\label{it:different_mu} $\mu^{\bbP,\phi}\neq \mu^{\bbQ,\phi}$, so that the FX drift is not the difference of rates; precisely, we take $\mu^{\bbP,\phi}=0\%$.
\end{enumerate}
Since the jump-to-default risk is neglected in this setup, one hedging CDS should be enough: we assume that its maturity equals that of the FX forward. Model parameters are specified in table~\ref{tab:negligible_defaults_model_pars}.

The results are plotted in figure~\ref{fig:no_defaults_pareto} in terms of the two competing objectives $\hat{J}_{\pi}$ and $\hat{\nu}_{\pi}$. In spite of the higher complexity of the environment, we get the same qualitative findings of \citet{daluiso2023Acva}, namely that the optimized agent's reward volatility is comparable to that of the delta hedging benchmark for high risk aversions, but that at all levels of $\beta$, the agent has lower costs than the delta hedging benchmark. The two main observations are reproduced:
\begin{itemize}
\item The agent approximately tracks the sensitivities of CVA to the risk factors $\lambda_t$ and $\phi_t$, but the allocations are smoother in the time direction, to avoid frequent small rebalances with opposite sign, as one can appreciate from the sample paths in figure~\ref{fig:negligible_defaults_actions}.
\item The rebalances of the more costly CDS instrument are reduced even further by partly cross-hedging the exposure to $\lambda$ via its correlation to $\phi$, i.e.~on average the agent overhedges the FX sensitivity when it is under-hedged in credit sensitivity, as one can see from the scatter plot of the corrections in figure~\ref{fig:no_defaults_scatter}.
\end{itemize}
Moreover, we can make the following new remarks on the effect of the distinction of $\bbP$ from $\bbQ$:
\begin{itemize}
\item The allocation in the USD bank account is slightly upward-biased with respect to the delta hedge, as it takes some risk to profit from the null $\bbP$ drift of the exchange rate, which makes a long position on USDEUR profitable (as USD rates are higher than EUR rates): see the distribution of FX actions in the upper plot of figure~\ref{fig:no_defaults_kde}.
\item The allocation in the protection-buyer CDS is slightly downward-biased with respect to the delta hedge, as it takes some market risk to profit from the fact that the counterparty never defaults: see the distribution of CDS actions in the lower plot of figure~\ref{fig:no_defaults_kde}.
\end{itemize}
All of the above effects are more pronounced for lower values of the risk aversion coefficient $\beta$. % and for higher bid-asks 

\subsection{Runs with frequent defaults}\label{sec:run_many_defaults}

\begin{table}
	\centering
	\begin{tabular}{llr}
		\toprule
		Description & Symbol & Value\\
		\colrule
		Instantaneous default intensity at the pricing date & $\lambda_{t_0}$ & 7.824\%\\
		Mean reversion speed of the default intensity & $k^{\bbP,\lambda}=k^{\bbQ,\lambda}$ & 0.19483\\
		Long term mean of the default intensity & $\theta^{\bbP,\lambda}=\theta^{\bbQ,\lambda}$ & 15.26\%\\
		CIR volatility coefficient of the default intensity & $\sigma^{\bbP,\lambda}=\sigma^{\bbQ,\lambda}$ & 35.973\%\\
		CDS trading cost parameter & $\gamma^{\lambda}$ & $1.66\times10^{-3}$\\
		Real-world correlation of Brownian drivers & $\rho^{\bbP}_{\lambda\phi}$ & 0\%\\
		Real-world drift of the USDEUR FX rate & $\mu^{\bbP,\phi}$ & -1.2\%\\
		\botrule
	\end{tabular}
	\caption{Value of termsheet and model parameters for section~\ref{sec:run_many_defaults}. Credit parameters represent a counterparty with an initial flat CDS spread term structure of 500 bps with a bid-ask semi-spread of about 10 bps. Time unit is years.}\label{tab:many_defaults_model_pars}
\end{table}

\begin{figure}
    \centering
    \begin{tikzpicture}

    \begin{axis}
        [
        	table/col sep=semicolon,
            height=6cm, width=\linewidth,
            %ymin=0, %ymax=1,
            title={Return-volatility comparison},
            xmajorgrids=true,
            ymajorgrids=true,
            % x-axis
            xlabel={$\hat{\nu}^2_\pi$},
            x tick label style={
                /pgf/number format/fixed,
                /pgf/number format/fixed zerofill,
                /pgf/number format/precision=1,
            },
            % y-axis
            ylabel={$\hat{J}_\pi$},
            y tick label style={
                /pgf/number format/fixed,
                /pgf/number format/fixed zerofill,
                /pgf/number format/precision=1,
            %    rotate=90
            },
        ]
        \addplot[
	        BaselineStyle,
        	skip coords between index={2}{12},
            visualization depends on={value \thisrow{anchor}\as\myanchor},
        ] table[x=x, y=y, meta=meta] {./default_500bps_1cds_performance.csv};
        \addplot+[
            AgentStyle,
            skip coords between index={0}{2},
            visualization depends on={value \thisrow{anchor}\as\myanchor},
        ] table[x=x, y=y, meta=meta] {./default_500bps_1cds_performance.csv};
    \end{axis}
    
\end{tikzpicture}
    \caption{Each dot represents the average performance, depending on $\beta$ annotated next to each dot, of an agent over 2,000 out-of-sample episodes in terms of return and unnormalized reward volatility, for a counterparty whose $\bbP$ probability of default is relevant. In all simulations $\mathcal{M}=\{\text{5Y}\}$, and CVA at inception equals $-1.40\times10^{-2}$ EUR.  Model parameters are specified in table~\ref{tab:many_defaults_model_pars}.}
    \label{fig:many_defaults_1cds_pareto}
\end{figure}
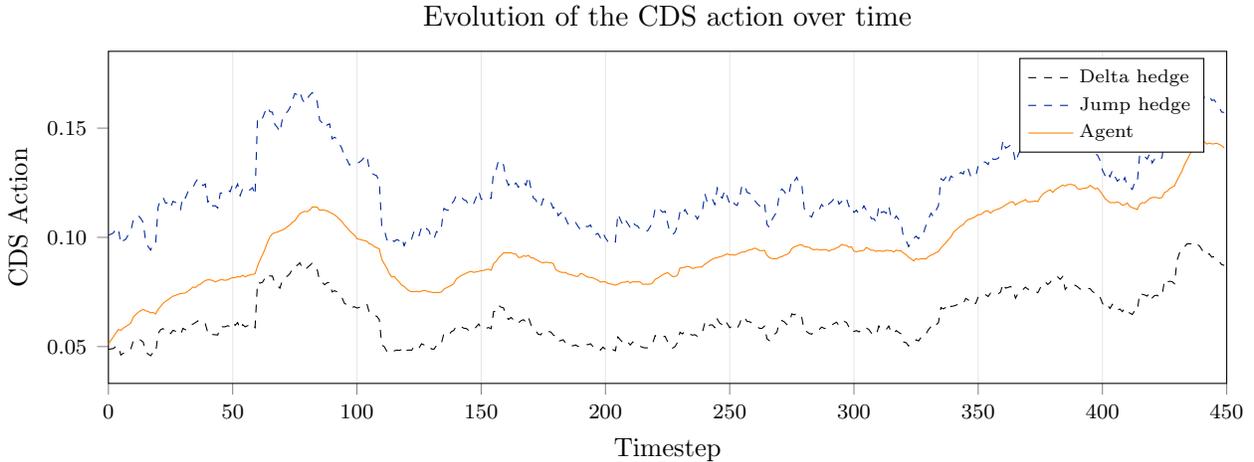
\begin{figure}
    \centering
    \begin{tikzpicture}

    \begin{axis}
        [
            height=6cm, width=\linewidth,
            title={Evolution of the CDS action over time},
            % x-axis
            xlabel={Timestep},
            xmajorgrids=true,
            xmin=0, xmax=450,
            % y-axis
            ylabel={CDS Action},
            y tick label style={
                /pgf/number format/fixed,
                /pgf/number format/fixed zerofill,
                /pgf/number format/precision=2,
            },
            % Legend
            legend style={
                legend columns=1,
                %legend pos=south west,
            }
        ]
        \addplot[LineStyle, black, dashed] table[y=delta] {./default_500bps_1cds_ep1-cr-actions.csv}; \addlegendentry{Delta hedge}
        \addplot[LineStyle, txtBlue, dashed] table[y=jump] {./default_500bps_1cds_ep1-cr-actions.csv}; \addlegendentry{Jump hedge}
        \addplot[LineStyle, txtOrange] table[y=500] {./default_500bps_1cds_ep1-cr-actions.csv}; \addlegendentry{Agent}
    \end{axis}

\end{tikzpicture}
    \caption{Single CDS action chosen by the agent and the delta hedging and jump hedging benchmarks on an out-of-sample episode, all expressed as the sensitivity to $\smash{\lambda_t}$ of the hedging portfolio, for a counterparty whose $\bbP$ probability of default is relevant. Model parameters are specified in table~\ref{tab:many_defaults_model_pars}.}
    \label{fig:many_defaults_1cds_actions}
\end{figure}
\begin{figure}
    \centering
    \begin{tikzpicture}

    \begin{axis}
        [
        	table/col sep=semicolon,
            height=7cm, width=\linewidth,
            %ymin=0, %ymax=1,
            title={Return-volatility comparison},
            xmajorgrids=true,
            ymajorgrids=true,
            % x-axis
            xlabel={$\hat{\nu}^2_\pi$},
            x tick label style={
                /pgf/number format/fixed,
                /pgf/number format/fixed zerofill,
                /pgf/number format/precision=1,
            },
            % y-axis
            ylabel={$\hat{J}_\pi$},
            y tick label style={
                /pgf/number format/fixed,
                /pgf/number format/fixed zerofill,
                /pgf/number format/precision=1,
            %    rotate=90
            },
        ]
        \addplot[
	        BaselineStyle,
        	skip coords between index={1}{9},
            visualization depends on={value \thisrow{anchor}\as\myanchor},
        ] table[x=x, y=y, meta=meta] {./default_500bps_2cds_performance.csv};
        \addplot+[
            AgentStyle,
			only marks,
            skip coords between index={0}{1},
            visualization depends on={value \thisrow{anchor}\as\myanchor},
        ] table[x=x, y=y, meta=meta] {./default_500bps_2cds_performance.csv};
    \end{axis}
    
\end{tikzpicture}
    \caption{Each dot represents the average performance, depending on $\beta$ annotated next to each dot, of an agent over 2,000 out-of-sample episodes in terms of return and unnormalized reward volatility, for a counterparty whose $\bbP$ probability of default is relevant. In all simulations $\mathcal{M}=\{\text{1Y},\text{5Y}\}$, and CVA at inception equals $-1.40\times10^{-2}$ EUR. Model parameters are specified in table~\ref{tab:many_defaults_model_pars}.}
    \label{fig:many_defaults_2cds_pareto}
\end{figure}
\begin{figure}
    \centering
    \begin{tikzpicture}
   	\begin{axis}
        [
            height=6cm, width=\linewidth,
            xmajorgrids=true,
            xmajorgrids=true,
            xmin=0, xmax=450,
            xlabel={Timestep},
            ylabel={Total $\lambda_t$ sensitivity},
			yticklabel style={
		        /pgf/number format/fixed,
		        /pgf/number format/fixed zerofill,
        		/pgf/number format/precision=3
			},
			scaled y ticks=false,
			% Legend
            legend style={
                legend columns=5,
                at={(0.5, -0.25)},
                anchor=north
%				legend pos=north east,
            }
        ]
        \addplot[LineStyle, black] table[y=cs01_total_baseline] {./default_500bps_2cds_ep1-cr-actions.csv}; \addlegendentry{Baseline}
        \addplot[LineStyle, dashdotted, txtOrange] table[y=cs01_total_5e2] {./default_500bps_2cds_ep1-cr-actions.csv}; \addlegendentry{$\beta=500$}
        \addplot[LineStyle, dashdotted, txtRed3] table[y=cs01_total_1e3] {./default_500bps_2cds_ep1-cr-actions.csv}; \addlegendentry{$\beta=\text{1,000}$}
        \addplot[LineStyle, dashdotted, txtRed] table[y=cs01_total_2e3] {./default_500bps_2cds_ep1-cr-actions.csv}; \addlegendentry{$\beta=\text{2,000}$}
        \addplot[LineStyle, dashdotted, txtGreen] table[y=cs01_total_5e3] {./default_500bps_2cds_ep1-cr-actions.csv}; \addlegendentry{$\beta=\text{5,000}$}
        \addplot[LineStyle, dashdotted, txtGreen3] table[y=cs01_total_1e4] {./default_500bps_2cds_ep1-cr-actions.csv}; \addlegendentry{$\beta=\text{10,000}$}
        \addplot[LineStyle, dashdotted, txtBlue3] table[y=cs01_total_3e4] {./default_500bps_2cds_ep1-cr-actions.csv}; \addlegendentry{$\beta=\text{30,000}$}
        \addplot[LineStyle, dashdotted, txtBlue] table[y=cs01_total_1e5] {./default_500bps_2cds_ep1-cr-actions.csv}; \addlegendentry{$\beta=\text{100,000}$}
        \addplot[LineStyle, dashdotted, txtPurple] table[y=cs01_total_1e6] {./default_500bps_2cds_ep1-cr-actions.csv}; \addlegendentry{$\beta=\text{1,000,000}$}
    \end{axis}
    
    \begin{axis}
        [
            height=6cm, width=\linewidth,
            yshift=4.5cm,
            title={Evolution of the total CDS allocations over time},
           	xmajorgrids=true,
            xmin=0, xmax=450,
            xtick style={draw=none},
            xticklabels={,,},
            ylabel={Total notional},
			yticklabel style={
		        /pgf/number format/fixed,
		        /pgf/number format/fixed zerofill,
        		/pgf/number format/precision=3
			},
			scaled y ticks=false,
        ]
        \addplot[LineStyle, black] table[y=notional_total_baseline] {./default_500bps_2cds_ep1-cr-actions.csv};
        \addplot[LineStyle, dashdotted, txtOrange] table[y=notional_total_5e2] {./default_500bps_2cds_ep1-cr-actions.csv};
        \addplot[LineStyle, dashdotted, txtRed3] table[y=notional_total_1e3] {./default_500bps_2cds_ep1-cr-actions.csv};
        \addplot[LineStyle, dashdotted, txtRed] table[y=notional_total_2e3] {./default_500bps_2cds_ep1-cr-actions.csv};
        \addplot[LineStyle, dashdotted, txtGreen] table[y=notional_total_5e3] {./default_500bps_2cds_ep1-cr-actions.csv};
        \addplot[LineStyle, dashdotted, txtGreen3] table[y=notional_total_1e4] {./default_500bps_2cds_ep1-cr-actions.csv};
        \addplot[LineStyle, dashdotted, txtBlue3] table[y=notional_total_3e4] {./default_500bps_2cds_ep1-cr-actions.csv};
        \addplot[LineStyle, dashdotted, txtBlue] table[y=notional_total_1e5] {./default_500bps_2cds_ep1-cr-actions.csv};
        \addplot[LineStyle, dashdotted, txtPurple] table[y=notional_total_1e6] {./default_500bps_2cds_ep1-cr-actions.csv};
    \end{axis}

\end{tikzpicture}
    \caption{Total credit hedge position chosen by the baseline and by agents, both in terms of notional and in terms of sensitivity to $\smash{\lambda_t}$, for a counterparty whose $\bbP$ probability of default is relevant, for which two CDS hedges are used. Model parameters are specified in table~\ref{tab:many_defaults_model_pars}.}
    \label{fig:many_defaults_2cds_total_actions}
\end{figure}
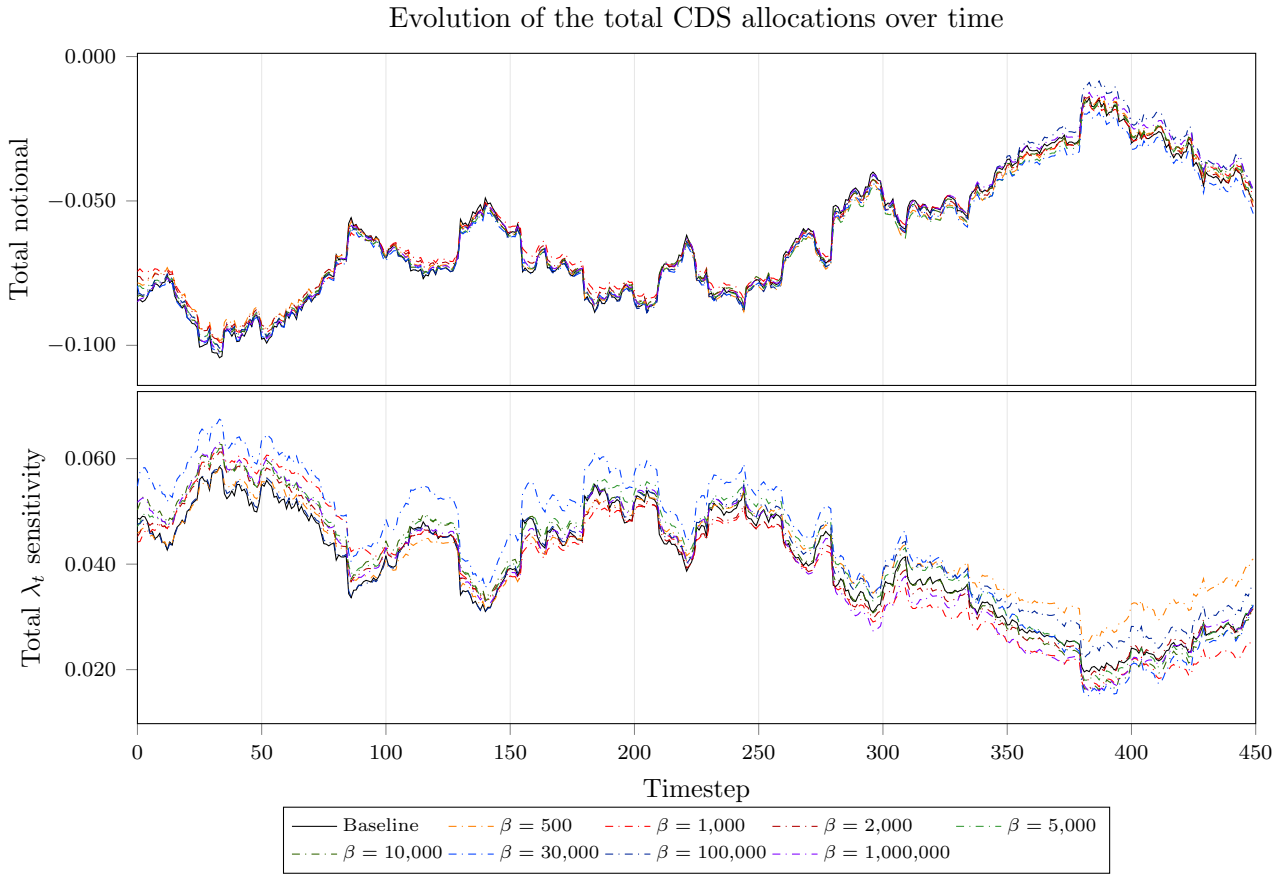
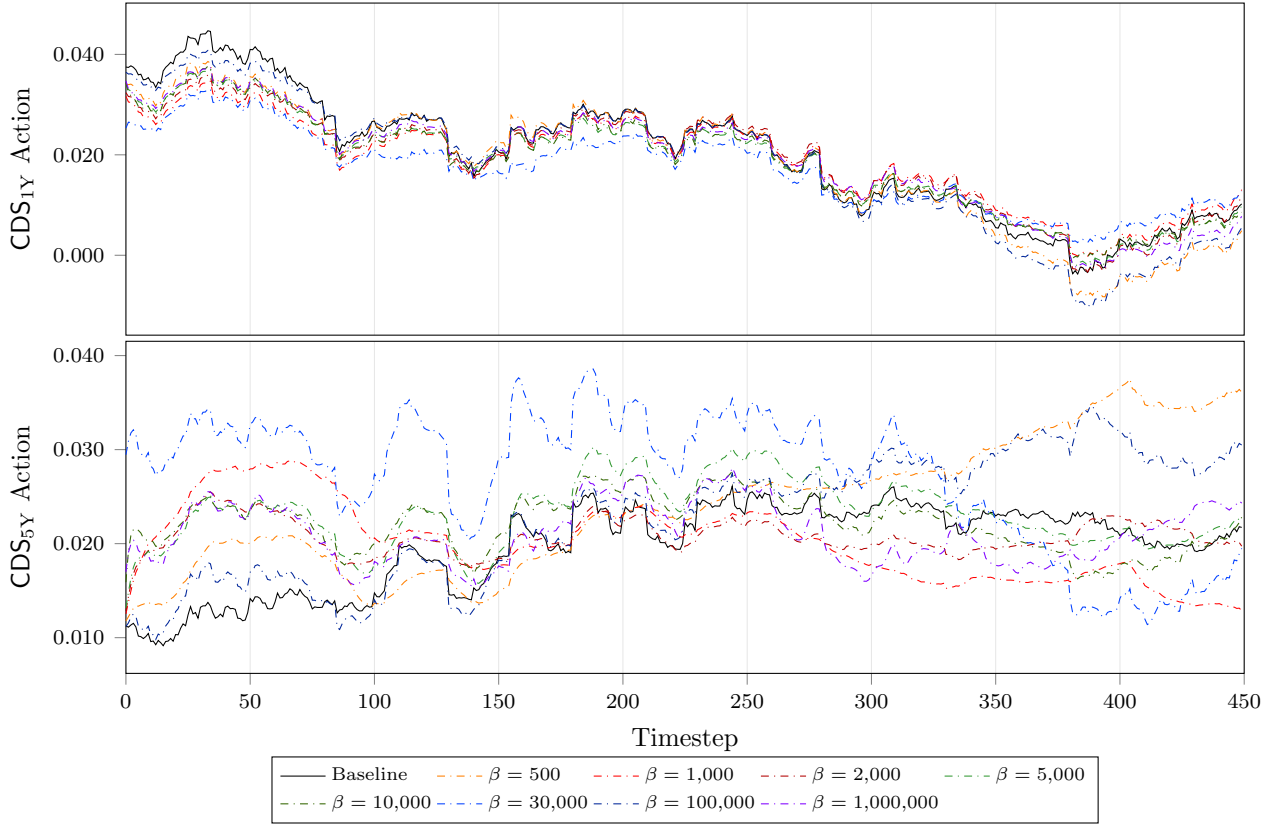
\begin{figure}
    \centering
       \begin{tikzpicture}

    \begin{axis}
        [
            height=6cm, width=\linewidth,
            xmajorgrids=true,
            xmin=0, xmax=450,
            xlabel={Timestep},
            ylabel={$\CDS_{\text{5Y}}$ Action},
			yticklabel style={
		        /pgf/number format/fixed,
		        /pgf/number format/fixed zerofill,
        		/pgf/number format/precision=3
			},
			scaled y ticks=false,
            % Legend
            legend style={
                legend columns=5,
                at={(0.5, -0.25)},
                anchor=north
%				legend pos=north east,
            }
        ]
        \addplot[LineStyle, black] table[y=cs01_cds2_baseline] {./default_500bps_2cds_ep1-cr-actions.csv}; \addlegendentry{Baseline}
        \addplot[LineStyle, dashdotted, txtOrange] table[y=cs01_cds2_5e2] {./default_500bps_2cds_ep1-cr-actions.csv}; \addlegendentry{$\beta=500$}
        \addplot[LineStyle, dashdotted, txtRed3] table[y=cs01_cds2_1e3] {./default_500bps_2cds_ep1-cr-actions.csv}; \addlegendentry{$\beta=\text{1,000}$}
        \addplot[LineStyle, dashdotted, txtRed] table[y=cs01_cds2_2e3] {./default_500bps_2cds_ep1-cr-actions.csv}; \addlegendentry{$\beta=\text{2,000}$}
        \addplot[LineStyle, dashdotted, txtGreen] table[y=cs01_cds2_5e3] {./default_500bps_2cds_ep1-cr-actions.csv}; \addlegendentry{$\beta=\text{5,000}$}
        \addplot[LineStyle, dashdotted, txtGreen3] table[y=cs01_cds2_1e4] {./default_500bps_2cds_ep1-cr-actions.csv}; \addlegendentry{$\beta=\text{10,000}$}
        \addplot[LineStyle, dashdotted, txtBlue3] table[y=cs01_cds2_3e4] {./default_500bps_2cds_ep1-cr-actions.csv}; \addlegendentry{$\beta=\text{30,000}$}
        \addplot[LineStyle, dashdotted, txtBlue] table[y=cs01_cds2_1e5] {./default_500bps_2cds_ep1-cr-actions.csv}; \addlegendentry{$\beta=\text{100,000}$}
        \addplot[LineStyle, dashdotted, txtPurple] table[y=cs01_cds2_1e6] {./default_500bps_2cds_ep1-cr-actions.csv}; \addlegendentry{$\beta=\text{1,000,000}$}
    \end{axis}

    \begin{axis}
        [
            height=6cm, width=\linewidth,
            yshift=4.5cm,
            title={Evolution of the credit actions over time},
            xmajorgrids=true,
            xmin=0, xmax=450,
            xtick style={draw=none},
            xticklabels={,,},
            ylabel={$\CDS_{\text{1Y}}$ Action},
			yticklabel style={
		        /pgf/number format/fixed,
		        /pgf/number format/fixed zerofill,
        		/pgf/number format/precision=3
			},
			scaled y ticks=false,
        ]
        \addplot[LineStyle, black] table[y=cs01_cds1_baseline] {./default_500bps_2cds_ep1-cr-actions.csv};
        \addplot[LineStyle, dashdotted, txtOrange] table[y=cs01_cds1_5e2] {./default_500bps_2cds_ep1-cr-actions.csv};
        \addplot[LineStyle, dashdotted, txtRed3] table[y=cs01_cds1_1e3] {./default_500bps_2cds_ep1-cr-actions.csv};
        \addplot[LineStyle, dashdotted, txtRed] table[y=cs01_cds1_2e3] {./default_500bps_2cds_ep1-cr-actions.csv};
        \addplot[LineStyle, dashdotted, txtGreen] table[y=cs01_cds1_5e3] {./default_500bps_2cds_ep1-cr-actions.csv};
        \addplot[LineStyle, dashdotted, txtGreen3] table[y=cs01_cds1_1e4] {./default_500bps_2cds_ep1-cr-actions.csv};
        \addplot[LineStyle, dashdotted, txtBlue3] table[y=cs01_cds1_3e4] {./default_500bps_2cds_ep1-cr-actions.csv};
        \addplot[LineStyle, dashdotted, txtBlue] table[y=cs01_cds1_1e5] {./default_500bps_2cds_ep1-cr-actions.csv};
        \addplot[LineStyle, dashdotted, txtPurple] table[y=cs01_cds1_1e6] {./default_500bps_2cds_ep1-cr-actions.csv};
    \end{axis}

\end{tikzpicture}
    \caption{CDS actions chosen by the baseline and by agents on an out-of-sample episode, all expressed as the sensitivity to $\smash{\lambda_t}$ of the allocation in the hedging instrument, for a counterparty whose $\bbP$ probability of default is relevant. Model parameters are specified in table~\ref{tab:many_defaults_model_pars}.}
    \label{fig:many_defaults_2cds_splitted_actions}
\end{figure}

In this subsection we explore cases in which the counterparty is very risky, so that even common practice would acknowledge that defaults should be considered. Default probabilities are quite high, so we should be able to do without the importance sampling scheme described at the end of section~\ref{sec:simulator}, which is therefore not adopted in this section.

To focus on the effect of defaults, we assume that real-world and risk-neutral dynamics are completely equal, by setting $\bar{m}=1$, $\rho^{\bbP}_{\lambda\phi}=\rho^{\bbQ}_{\lambda\phi}$, and $\mu^{\bbP,\phi}=\mu^{\bbQ,\phi}$. Model parameters are specified in table~\ref{tab:many_defaults_model_pars}.

In this setting, if one keeps one single hedging CDS ($|\mathscr{M}|=1$), then no hedging strategy is able to cancel all risks: the allocation which cancels the sensitivity to infinitesimal movements of $\lambda$ (``delta hedge'') is not the same that ensures zero PnL if the counterparty defaults now (``jump hedge'').\footnote{This also means that the relative size of $\hat{J}_\pi$ and $\hat{\nu}^2_\pi$ changes, and so does the interesting range of values for $\beta$ where none of the two components of $\hat{\eta}_\pi$ is negligible. This is why the set of risk aversions is different here than in the previous section.} There is no reason why either should be optimal for any level of risk aversion, and we expect the optimum to be a non-trivial compromise between the two. This was already observed in the simpler setting of \citet{daluiso2023Acva}, and is confirmed with the present more realistic simulator, as one can see from figures~\ref{fig:many_defaults_1cds_pareto} and~\ref{fig:many_defaults_1cds_actions}. In the former, one can remark that many of the trained agents dominate the benchmarks along both return and volatility, although the shape of our frontier clearly indicates that learning was not perfect for a couple of the highest risk aversions. 

We then introduce a second tradable CDS with shorter maturity of 1 year, so that one can easily derive allocations that in continuous time would cancel both the diffusive and the jump risk: we use the term ``baseline'' to identify this strategy in what follows. Even compared with such a challenging benchmark the RL agent performs well. Indeed, for finite risk aversion $\beta$, it is able to achieve material cost reductions with volatilities which are of the same order of magnitude as that of the time-discretized zero-risk baseline: see figure~\ref{fig:many_defaults_2cds_pareto}. It turns out that the agent tracks the baseline's sensitivity to both $\lambda_t$ and the jump to default (figure~\ref{fig:many_defaults_2cds_total_actions}) with a non-trivial combination of the two CDS which is often quite different from the baseline (figure~\ref{fig:many_defaults_2cds_splitted_actions}).

However, in the already mentioned figure~\ref{fig:many_defaults_2cds_pareto}, the dependence on $\beta$ is unsatisfactory, so that we refrained from joining the dots with lines. Therefore, at least for some value of $\beta$, the trained agent is probably not perfectly aligned with the required risk aversion after 1,500 iterations. Importance sampling may help here, as it does in the next section in an even more challenging setup.

\subsection{Runs with rare defaults}\label{sec:run_rare_defaults}

\begin{table}
	\centering
	\begin{tabular}{llr}
		\toprule
		Description & Symbol & Value\\
		\colrule
		Instantaneous default intensity at the pricing date & $\lambda_{t_0}$ & 1.66\%\\
		Mean reversion speed of the default intensity & $k^{\bbP,\lambda}=k^{\bbQ,\lambda}$ & 0.3769\\
		Long term mean of the default intensity & $\theta^{\bbP,\lambda}=\theta^{\bbQ,\lambda}$ & 1.87\%\\
		CIR volatility coefficient of the default intensity & $\sigma^{\bbP,\lambda}=\sigma^{\bbQ,\lambda}$ &  19.22\%\\
		CDS trading cost parameter & $\gamma^{\lambda}$ & $8.3\times10^{-4}$\\
		Real-world correlation of Brownian drivers & $\rho^{\bbP}_{\lambda\phi}$ & 0\%\\
		Real-world drift of the USDEUR FX rate & $\mu^{\bbP,\phi}$ & -1.2\%\\
		\botrule
	\end{tabular}
	\caption{Value of termsheet and model parameters for section~\ref{sec:run_rare_defaults}. Credit parameters represent a counterparty with an initial flat CDS spread term structure of 100 bps with a bid-ask semi-spread of about 5 bps. Time unit is years.}\label{tab:rare_defaults_model_pars}
\end{table}

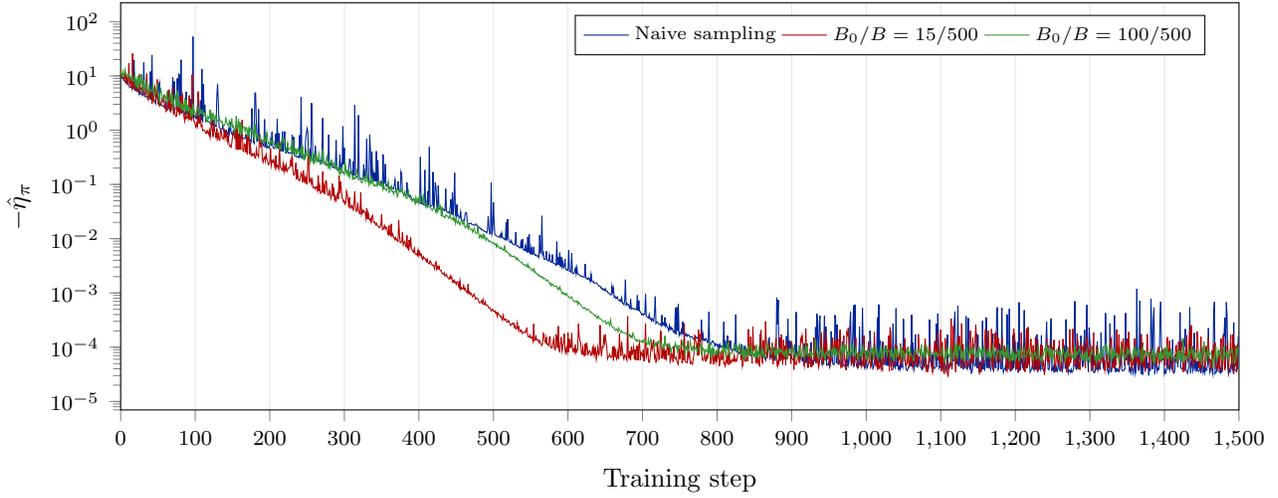
\begin{figure}
    \centering
       \begin{tikzpicture}

    \begin{axis}
        [
            height=7cm, width=\linewidth,
            title={Learning curves with one CDS hedge},
            xmajorgrids=true,
            xmin=0, xmax=1500,
            xlabel={Training step},
            ymode=log,
            log basis y={10},
            ylabel={$-\hat{\eta}_\pi$},
            % Legend
            legend style={
                legend columns=5,
                legend pos=north east,
            }
        ]
        
        \addplot[LineStyle, txtBlue] table[y=5e2] {./default_100bps_1cds_training.csv}; \addlegendentry{Naive sampling}		
        \addplot[LineStyle, txtRed] table[y=5e2] {./default_100bps_1cds_training_is15.csv}; \addlegendentry{$B_0/B=15/500$}
        \addplot[LineStyle, txtGreen] table[y=5e2] {./default_100bps_1cds_training_is100.csv}; \addlegendentry{$B_0/B=100/500$}
    \end{axis}

\end{tikzpicture}
    \caption{In-sample loss function as a function of the training epoch with and without importance sampling, for risk aversion parameter $\beta=500$, when only one CDS is available for hedging, and the $\bbP$ default probability is small but non-negligible.  Model parameters are specified in table~\ref{tab:rare_defaults_model_pars}. Note the log scale on the y axis.}
    \label{fig:rare_defaults_1cds_eta}
\end{figure}
\begin{figure}
    \centering
    \begin{tikzpicture}

    \begin{axis}
        [
        	table/col sep=semicolon,
            height=7cm, width=\linewidth,
            %ymin=0, %ymax=1,
            title={Return-volatility comparison},
            xmajorgrids=true,
            ymajorgrids=true,
            % x-axis
            xlabel={$\hat{\nu}^2_\pi$},
            x tick label style={
                /pgf/number format/fixed,
                /pgf/number format/fixed zerofill,
                /pgf/number format/precision=1,
            },
            % y-axis
            ylabel={$\hat{J}_\pi$},
            y tick label style={
                /pgf/number format/fixed,
                /pgf/number format/fixed zerofill,
                /pgf/number format/precision=1,
            %    rotate=90
            },
            % Legend
            legend style={
                legend columns=1,
                legend pos=south east,
            }
        ]
        \addplot+[
            AgentStyle,
        	only marks,
            skip coords between index={0}{2},
            visualization depends on={value \thisrow{anchor_noIS}\as\myanchor},
        ] table[x=x_noIS, y=y_noIS, meta=meta_noIS] {./default_100bps_1cds_performance.csv};
        \addplot+[
            AgentStyle,
            txtRed,
            skip coords between index={0}{2},
            visualization depends on={value \thisrow{anchor_IS15}\as\myanchor},
        ] table[x=x_IS15, y=y_IS15, meta=meta_IS15] {./default_100bps_1cds_performance.csv};
        \addplot+[
            AgentStyle,
            txtGreen,
            skip coords between index={0}{2},
            visualization depends on={value \thisrow{anchor_IS100}\as\myanchor},
        ] table[x=x_IS100, y=y_IS100, meta=meta_IS100] {./default_100bps_1cds_performance.csv};
        \addplot+[
	        BaselineStyle,
        	skip coords between index={2}{12},
            visualization depends on={value \thisrow{anchor_noIS}\as\myanchor},
        ] table[x=x_noIS, y=y_noIS, meta=meta_noIS] {./default_100bps_1cds_performance.csv};
        \legend{Naive sampling, $B_0/B=15/500$, $B_0/B=100/500$}
    \end{axis}
    
\end{tikzpicture}
    \caption{Each dot represents the average performance depending on $\beta$, annotated next to each dot, of an agent over 20,000 out-of-sample episodes in terms of return and unnormalized reward volatility, for a counterparty whose $\bbP$ probability of default is small but non-negligible. In all simulations $\mathcal{M}=\{\text{5Y}\}$, and CVA at inception equals $-3.34\times10^{-3}$ EUR. The blue dots are built from a policy trained without importance sampling and the red and green dots from a policy trained using importance sampling with $B_0=15$ and $B_0=100$ respectively, over batches of size $B=500$. An outlier blue point for $\beta=10$ was removed for clarity. Model parameters are specified in table~\ref{tab:rare_defaults_model_pars}.}
    \label{fig:rare_defaults_1cds_pareto}
\end{figure}
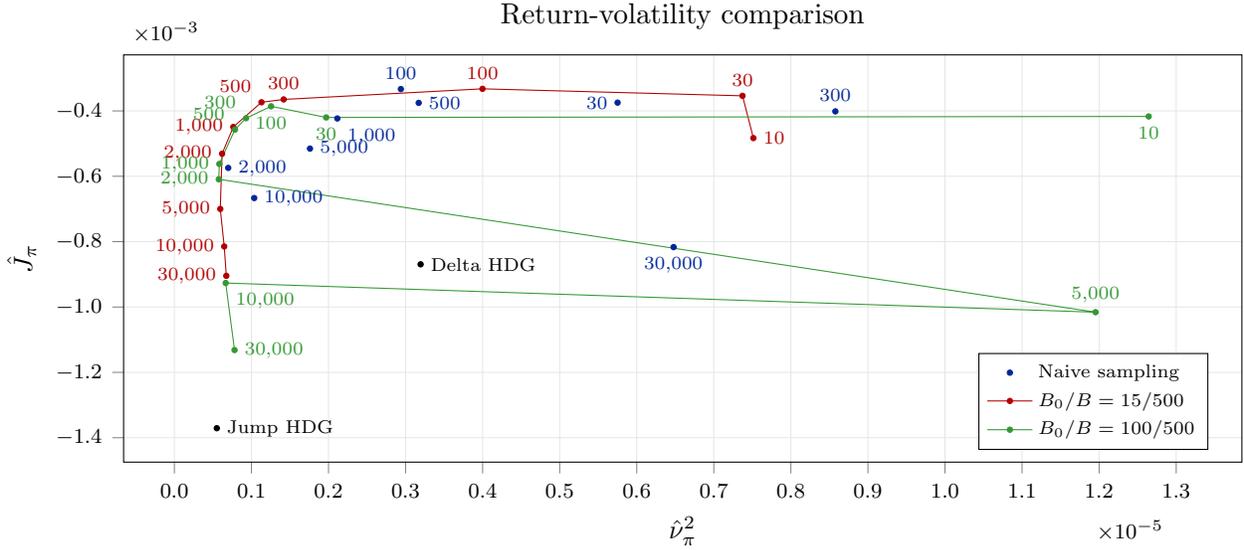
\begin{table}
	\centering
	\begin{tabular}{@{}
		c
		S[table-format=-1.3e-1, table-space-text-post={$^{***}$}]
		S[table-format=-1.3e-1, table-space-text-post={$^{***}$}]
	@{}}
		\multicolumn{3}{c}{Improvement of $\hat{\eta}_{\pi}$ with respect to naive sampling} \\
		\toprule
		{$\beta$} & {$B_0/B=15/500$} & {$B_0/B=100/500$} \\
		\colrule
		10 & 6.462E-04$^{***}$ & 6.612E-04$^{***}$ \\
		30 & -2.781E-05$^{}$ & 6.813E-05$^{*}$ \\
		100 & -1.052E-04$^{*}$ & 1.157E-04$^{***}$ \\
		300 & 2.184E-03$^{***}$ & 2.273E-03$^{***}$ \\
		500 & 1.022E-03$^{***}$ & 1.111E-03$^{***}$ \\
		1,000 & 1.324E-03$^{***}$ & 1.392E-03$^{***}$ \\
		2,000 & 2.024E-04$^{***}$ & 2.110E-04$^{***}$ \\
		5,000 & 5.636E-03$^{***}$ & -5.147E-02$^{***}$ \\
		10,000 & 3.743E-03$^{***}$ & 3.442E-03$^{***}$ \\
		30,000 & 1.740E-01$^{***}$ & 1.706E-01$^{***}$ \\
		\botrule
	\end{tabular}
	\caption{Difference between the risk-averse objective function $\hat{\eta}_\pi$ obtained by optimization with and without importance sampling, estimated on 20,000 out-of-sample episodes, for agents using one CDS hedge. The default probability of the counterparty is small but non-negligible. $^{***}$, $^{**}$, $^*$ mark significance at confidence levels 0.1\%, 1\%, and 5\% respectively. Model parameters are specified in table~\ref{tab:rare_defaults_model_pars}.}\label{tab:eta_1cds}
\end{table}

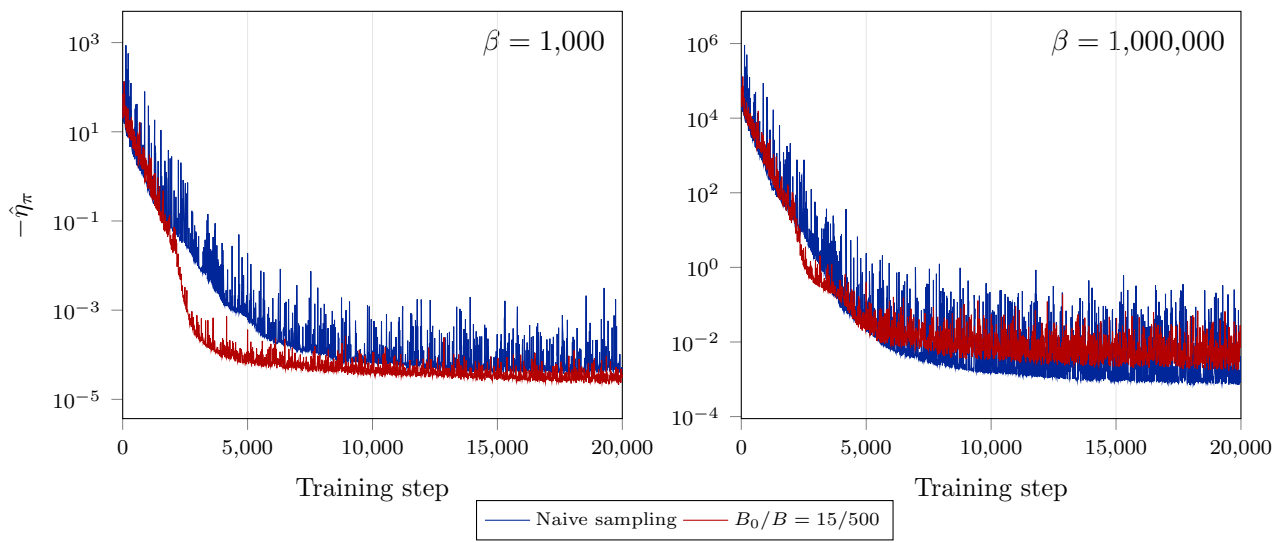
\begin{figure}
       \begin{tikzpicture}
	
	\node[align=center] at (7.45, 6){Learning curves with two CDS hedges};
	\node[] at (5.6, 5){$\beta=\text{1,000}$};
	\node[] at (13.5, 5){$\beta=\text{1,000,000}$};
    \begin{axis}
        [
            height=7cm, width=\linewidth*0.5,
            xmajorgrids=true,
            xmin=0, xmax=20000,
            xlabel={Training step},
            ymode=log,
            log basis y={10},
            scaled x ticks=false,
            ylabel={$-\hat{\eta}_\pi$},
            % Legend
            legend style={
                legend columns=2,
                at={(1.125, -0.2)},
                anchor=north
                %legend pos=north east,
            }
        ]
        \addplot[LineStyle, txtBlue] table[y=1e3] {./default_100bps_2cds_training.csv}; \addlegendentry{Naive sampling}		
        \addplot[LineStyle, txtRed] table[y=1e3] {./default_100bps_2cds_training_is15.csv}; \addlegendentry{$B_0/B=15/500$}
        %\addplot[LineStyle, txtGreen] table[y=1e3] {./default_100bps_2cds_training_is100.csv}; \addlegendentry{$B_0/B=100/500$}
    \end{axis}

    \begin{axis}
        [
            height=7cm, width=\linewidth*0.5,
            xshift=\linewidth*0.5,
            xmajorgrids=true,
            xmin=0, xmax=20000,
            xlabel={Training step},
            ymode=log,
            log basis y={10},
            scaled x ticks=false,
        ]
        \addplot[LineStyle, txtBlue] table[y=1e6] {./default_100bps_2cds_training.csv};
        \addplot[LineStyle, txtRed] table[y=1e6] {./default_100bps_2cds_training_is15.csv};
        %\addplot[LineStyle, txtGreen] table[y=1e6] {./default_100bps_2cds_training_is100.csv};
    \end{axis}

\end{tikzpicture}
    \caption{In-sample loss function as a function of the training epoch with and without importance sampling, for different risk aversion parameters $\beta$. The default probability of the counterparty is small but non-negligible, and two CDS hedges are available. Model parameters are specified in table~\ref{tab:rare_defaults_model_pars}. Note the log scale on the y axis.}
    \label{fig:rare_defaults_2cds_eta}
\end{figure}

\begin{figure}
    \centering
    \begin{tikzpicture}

    \begin{axis}
        [
        	table/col sep=semicolon,
            height=7cm, width=\linewidth,
            %ymin=0, %ymax=1,
            title={Return-volatility comparison},
            xmajorgrids=true,
            ymajorgrids=true,
            % x-axis
            xmode=log,
            log basis x={10},
            xlabel={$\hat{\nu}^2_\pi$},
            %xmin=0, xmax=0.00000018,
  			%xtick={0.00000005, 0.0000001},
            %scaled x ticks=false,
            %x tick label style={
            %    /pgf/number format/fixed,
            %    /pgf/number format/fixed zerofill,
            %    /pgf/number format/precision=1,
            %},
            % y-axis
            ylabel={$\hat{J}_\pi$},
            y tick label style={
                /pgf/number format/fixed,
                /pgf/number format/fixed zerofill,
                /pgf/number format/precision=1,
            %    rotate=90
            },
            % Legend
            legend style={
                legend columns=1,
                legend pos=south east,
            }
        ]
        \addplot+[
            AgentStyle,
        	only marks,
            skip coords between index={0}{1},
            visualization depends on={value \thisrow{anchor_noIS}\as\myanchor},
        ] table[x=x_noIS, y=y_noIS, meta=meta_noIS] {./default_100bps_2cds_performance.csv};
        \addplot+[
            AgentStyle,
            txtRed,
            skip coords between index={0}{1},
            visualization depends on={value \thisrow{anchor_IS15}\as\myanchor},
        ] table[x=x_IS15, y=y_IS15, meta=meta_IS15] {./default_100bps_2cds_performance.csv};
        \addplot+[
	        BaselineStyle,
        	skip coords between index={1}{9},
            visualization depends on={value \thisrow{anchor_noIS}\as\myanchor},
        ] table[x=x_noIS, y=y_noIS, meta=meta_noIS] {./default_100bps_2cds_performance.csv};
        \legend{Naive sampling, $B_0/B=15/500$}
        %\addplot+[
         %   AgentStyle,
          %  txtGreen,
          %  skip coords between index={0}{2},
          %  visualization depends on={value \thisrow{anchor_IS100}\as\myanchor},
        %] table[x=x_IS100, y=y_IS100, meta=meta_IS100] {./default_100bps_2cds_performance.csv};
    \end{axis}
    
\end{tikzpicture}
    \caption{Each dot represents the average performance, depending on $\beta$ annotated next to each dot, of an agent over 20,000 out-of-sample episodes in terms of return and unnormalized reward volatility, for a counterparty whose $\bbP$ probability of default is small but non-negligible. In all simulations $\mathcal{M}=\{\text{1Y},\text{5Y}\}$, $\smash{\gamma^{\lambda}=8.3\times10^{-4}}$ (i.e., bid-ask semi-spread of about 5 bps), $\smash{\gamma^{\USD}=5\times10^{-5}}$, $\smash{\rho^{\bbP}_{\lambda\phi}=0}$ and CVA at inception equals $-3.34\times10^{-3}$ EUR. The blue dots are built from a policy trained without importance sampling and the red dots from a policy trained using importance sampling with $B_0=15$ over batches of size $B=500$. An outlier blue point for $\beta=500$ was removed for clarity. Note the log scale on the x axis. Model parameters are specified in table~\ref{tab:rare_defaults_model_pars}.}
    \label{fig:rare_defaults_2cds_pareto}
\end{figure}
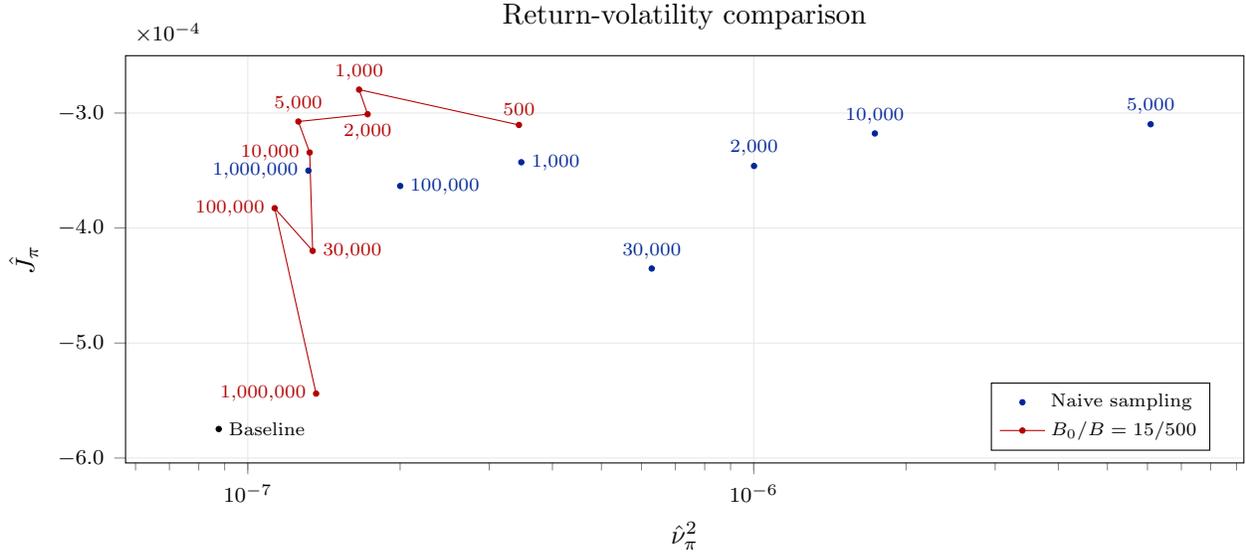
\begin{table}
	\centering
	\begin{tabular}{@{}
		c
		S[table-format=-1.3e-1, table-space-text-post={$^{***}$}]
	@{}}
		\multicolumn{2}{c}{$\hat{\eta}_\pi$ improvement with importance sampling} \\
		\toprule
		{$\beta$} & {$B_0/B=15/500$} \\
		\colrule
		500 & 1.149E+00$^{***}$ \\
		1,000 & 2.444E-04$^{***}$ \\
		2,000 & 1.702E-03$^{***}$ \\
		5,000 & 2.981E-02$^{*}$ \\
		10,000 & 1.601E-02$^{***}$ \\
		30,000 & 1.484E-02$^{***}$ \\
		100,000 & 8.689E-03$^{***}$ \\
		1,000,000 & -4.883E-03$^{}$ \\
		\botrule
	\end{tabular}
	\caption{Difference between the risk-averse objective function $\hat{\eta}_\pi$ obtained by optimization with and without importance sampling, estimated on 20,000 out-of-sample episodes, for agents using two CDS hedges. The default probability of the counterparty is small but non-negligible. $^{***}$, $^{**}$, $^*$ mark significance at confidence levels 0.1\%, 1\%, and 5\% respectively. Model parameters are specified in table~\ref{tab:rare_defaults_model_pars}.}\label{tab:eta_2cds}
\end{table}

Finally, we consider the case in which the counterparty has only a moderate initial hazard rate, but we want to overcome the over-simplistic assumption $\bar{m}\ll 1$ of section~\ref{sec:run_no_defaults}. Therefore, we assume $\bbP=\bbQ$ as in the previous section. Model parameters are specified in table~\ref{tab:rare_defaults_model_pars}. We start from the case $\mathscr{M}=\{\mathrm{5Y}\}$.

First of all, we notice that importance sampling becomes very valuable, because the rareness of the default event in naive sampling causes noisy training curves which converge slowly: see figure~\ref{fig:rare_defaults_1cds_eta}, where the learning curves with importance sampling are also showed for two different values of the $B_0$ hyperparameter. Interestingly, its impact on training is non-trivial: while a higher number of defaults per batch yields a smoother learning curve, the faster convergence is achieved with the number of defaults per batch quite low, although higher than the original probability of default within the trading horizon (which would imply approximately 3 defaults per batch). Therefore, it seems that in our experiment the benefit of importance sampling does not come only from high oversampling of the rare event, but also from making it appear the exact same number of times in each batch, a characteristic which is specific of our \textit{ad hoc} scheme as underlined in remark~\ref{rem:adaptive_importance_sampling}. This trade-off of training speed and accuracy may support an increasing schedule of values of $B_0$ along iterations: we leave this fine-tuning to future research.

The performance metrics are plotted in figure~\ref{fig:rare_defaults_1cds_pareto}. Even without importance sampling, we do get agents with an interesting risk-cost profile when compared to the two benchmarks; but while moving $\beta$, the point in return-volatility space wanders without any coherent pattern, which is a clear signal of imperfect learning. On the other hand, importance sampling with 15 defaults per batch gives a smooth curve with consistently better results. The agent's policy is less costly than the cheapest benchmark (delta hedge) for all risk aversions but the highest one, while often getting close to the volatility of the less volatile benchmark (jump hedge). Increasing the number of defaults to 100 yields comparable results, and sometimes even slightly better ones in terms of the objective function $\hat{\eta}_{\pi}$ (table~\ref{tab:eta_1cds}), but a couple of points are clearly off the efficient frontier, so probably such a large distortion of the original distribution is not advisable for robustness concerns.

As we saw that even for the low-risk counterparty under analysis, the purely-diffusive delta hedge is far from optimal, it makes sense to consider the case $\mathscr{M}=\{\mathrm{1Y},\mathrm{5Y}\}$ in which an additional hedging instrument is available. In view of the remarks made in the previous paragraph, we fix $B_0=15$.

The noise reduction effect of importance sampling in training is confirmed whichever the risk aversion, as one can see from both the low-$\beta$ left plot in and the high-$\beta$ right plot in figure~\ref{fig:rare_defaults_2cds_eta}; while the improvement in training speed seems less relevant in the extreme risk aversion setting of the latter. However, after 20,000 epochs, the training with naive sampling still completely lacks a coherent dependence on risk aversion, and its reward volatility is one or more orders of magnitude larger than the baseline's for most values of $\beta$; while importance sampling is closer to drawing a decent Pareto frontier (figure~\ref{fig:rare_defaults_2cds_pareto}). Indeed, the values of $\hat{\eta}_\pi$ are consistently better with importance sampling except for the extreme risk aversion $\beta=\text{1,000,000}$, as one can appreciate from table~\ref{tab:eta_2cds}.

\section{Conclusion}\label{sec:conclusion}

In this paper, we framed the problem of hedging Credit Valuation Adjustments under a risk-averse Reinforcement Learning perspective, with the intent to allow for any desired level of realism.

To this aim, we first of all described carefully the mechanics of the hedging book. Thanks to the adoption of model-free Reinforcement Learning, we were able to delve into the details of collateralization and transaction costs, and to highlight the role of the pricing model as opposed to the data generating one. We believe that our clear statement of the problem aligned with industry needs may have independent value for researchers.

Secondly, we highlighted the need of a custom definition of risk aversion, again starting from market practice. We found in Trust Region Volatility Optimization an algorithm very close to our needs, but for its infinite horizon formulation: therefore, we adapted it to our random finite horizon setting. This contribution may be of interest for the Reinforcement Learning community, also outside the scope of finance.

Thirdly, we tested the proposed approach in market scenarios of increasing complexity. The optimized agents are always able to save on costs by a wise reduction of hedge rebalances, but also cleverly exploit the expected returns arising from any difference between risk-neutral and real-world probabilities, whenever the risk aversion is not very high. When default events are introduced, they impact the optimal strategies in non-obvious ways, and the relevant benchmarks are regularly outdone. Even the hedging of netting sets with low default probability is significantly improved, especially if a special importance sampling scheme is adopted to avoid the noise which the rare default event would introduce in training batches. All these evidences should encourage the usage of algorithms like ours in production systems, be it for CVA or for other complex claims.

\section*{Acknowledgements}

The authors are grateful to the organizing committee of the XXIV Workshop on Quantitative Finance for the opportunity to present a working version of the present research project in April 2023.

\section*{Disclaimer}

The authors report no potential competing interest. The opinions expressed in this document are solely those of the authors and do not represent in any way those of their present and past employers.

\bibliographystyle{rQUF}
\bibliography{Bibl}

\end{document}